\documentclass[12pt]{article}
\textheight=9in
\textwidth=6in
\evensidemargin=0.2in
\oddsidemargin=0.2in

\usepackage[dvipdfmx,hiresbb]{graphicx}
\usepackage{xcolor}
\usepackage{amsmath}
\usepackage{amssymb}
\usepackage{mathrsfs}
\usepackage{amsthm}
\usepackage{amscd}
\usepackage{txfonts}      
\usepackage{bbm}
\usepackage{dsfont}
\usepackage{anyfontsize}

\usepackage{sectsty}

\sectionfont{\Large}
\subsectionfont{\large}
\subsubsectionfont{\normalsize}

\newcommand{\dia}{\star}
\newcommand{\e}{\epsilon}

\newcommand{\G}{\mathbf{G}}
\renewcommand{\P}{\bo{P}}

\newcommand{\C}{\mathbb{C}}
\renewcommand{\H}{\mathbb{H}}
\newcommand{\lap}{\text{\Large $\square$}}

\newcommand{\va}{\varphi}

\newcommand{\w}{\wedge}

\newcommand{\vs}{\vspace}
\newcommand{\Tr}{\operatorname{Tr}}

\newcommand{\Ker}{\operatorname{Ker}}

\renewcommand{\Im}{\operatorname{Im}}

\newcommand{\vol}{\operatorname{vol}}

\newcommand{\bo}[1]{\boldsymbol{#1}}
\renewcommand{\b}[1]{\bar{#1}}

\newcommand{\pa}{\partial}

\newcommand{\A}{\mathcal{A}}
\newcommand{\B}{\mathcal{B}}

\newcommand{\disp}{\displaystyle}
\newcommand{\tsum}{\textstyle{\sum}\disp}
\newcommand{\id}{\operatorname{id}}
\renewcommand{\d}{\operatorname{d}}
\renewcommand{\i}{\operatorname{i}}
\newcommand{\M}{\mathcal{M}}

\theoremstyle{definition}
\newtheorem{lemma}{Lemma}
\newtheorem{proposition}{Proposition}
\numberwithin{equation}{section}

\begin{document}




\begin{center}
\textbf{\LARGE
 Algebra of Kodaira-Spencer Gravity}
 \vs{0.2cm}

 \textbf{\LARGE
 and Deformation of Calabi-Yau Manifold}
 \vs{1cm}
 
 Kenji Mohri\\
Faculty of Pure and Applied Sciences,
University of Tsukuba\\ 1-1-1 Ten-nodai,
Tsukuba, Ibaraki 305-8571, Japan\\
\emph{mohri@het.ph.tsukuba.ac.jp}

\end{center}


\begin{abstract}
 We study the algebraic structure of the configuration space
 of the Kodaira-Spencer gravity theory on a Calabi-Yau threefold.
 We then investigate the deformation problem of the Kodaira-Spencer gravity
 at the classical level using the algebraic tools obtained here. 
\end{abstract}



\section{Introduction}	
Generally speaking, second quantization of string theory is string field theory \cite{CBTh,Zw}.
In the case of topological closed B string theory \cite{Wi1}, however,
 its second quantization is known to
reduce to a six-dimensional field theory on a Calabi-Yau three-fold $X$ \cite{Wi2},
which is the Kodaira-Spencer (KS) theory of gravity \cite{BCOV}.
The configuration space of the KS gravity theory $\B$, which we will define in \eqref{b} below,
 has a structure of the differential Gerstenhaber-Batalin-Vilkovisky (dGBV in short) algebra \cite{BaKo,Man}.

 In this paper we study the algebraic structure of the configuration space $\B$ more closely.
 We show, in particular,  that some operators such as  the Hodge dual operator or the Lefshetz operators,
which are originally defined on the space of the differential forms $\A$ and are transferred to $\B$
 by the isomorphism between $\A$ and $\B$ as vector spaces, 
 behave better in $\B$ than in $\A$ in some sense.
 We also find another dGBV algebra structure in $\B$.
 Then we consider the problem of holomorphic deformation of the KS theory
 using the algebraic tools developed above.

 The organization of the present paper is as follows.
 In Section 2, the configuration space $\B$ and the action of the KS gravity theory,
 as well as the space of the differential forms $\A$, are introduced.
  In Section 3, we describe the algebraic structure of the configuration space $\B$;
  we convert the various linear operators on $\A$ to $\B$ and express these as differential operators
  of the sigma model variables;
  in particular we
  find that the Hodge dual operator is super-algebra homomorphism modulo a phase factor,
and the Lefshetz operators are derivations on $\B$.
We give another dGBV algebra structure  on $\B$;
two dGBV algebras are related to each other by the Hodge duality.
In Section 4, we construct the solution of the KS equation, which is the classical equation of the KS action,
using the algebraic tools developed above.
  Then we describe the deformation of the action and the supercharges
  associated with the condensation of the solution.
In Section 5, we identify the deformation of the previous section as the holomorphic limit
   of the deformation of the complex structure of the Calabi-Yau three-fold $X$
   by a direct computation.
   We also propose a deformation of the states under the holomorphic deformation.
In Section 6, we discuss rather briefly future directions of the present  paper.

\section{Kodaira-Spencer Theory of Gravity}
\subsection{Calabi-Yau threefold}
Let $X$ be a Calabi-Yau threefold with a fixed complex structure.
We also fix a holomorphic three-form $\varOmega=s(z)\d z^1\w\d z^2\w\d z^3$,
where $z^i$s are local holomorphic coordinates of $X$, and  
a K\"{a}hler form $\omega=\i \sum g_{i\b{j}}\d z^i\w\d z^{\b{j}}$.
We do not assume that the K\"{a}hler metric $g_{i\b{j}}$ is Ricci-flat.

There are two volume forms:
the one is $\d \vol_{\omega}=\omega^3/3!$, the other is
$\d \vol_{\varOmega}=\i \varOmega\w\b{\varOmega}$,
the ratio of which yields a scalar function  $\sigma$ on $X$:
\begin{equation}
 \exp(\sigma)=|s|^2(\det g)^{-1}.
\label{scalar}
\end{equation}
Note that the Ricci tensor can be written as  $R_{i\b{j}}=\pa_i\pa_{\b{j}}\sigma$.

\subsection{Configuration space of the Kodaira-Spencer theory}
Two dimensional B-twisted topological sigma model with $X$ as a target \cite{Wi1} coupled to topological gravity
defines a topological B-string theory \cite{Wi2,BCOV}.

The string field theory of the topological B-string is reduced to a six dimensional field theory on $X$ \cite{Wi2},
known as the Kodaira-Spencer  theory of gravity \cite{BCOV}.

To explain the field content of the KS theory, we introduce the following space:
\begin{equation}
 \B=\bigoplus_{p,q=0}^{3} B^{p,q},
  \quad B^{p,q}=\Gamma(X,\textstyle{\bigwedge}^p TX\otimes \textstyle{\bigwedge}^q\overline{T^{\ast}X}),
  \label{b}
\end{equation}
where $TX(\overline{T^{\ast}X})$ is the (anti-)holomorphic (co)tangent bundle of $X$.
Note that $\B$ is a subspace of the state space of the topological B-sigma model \cite{Wi1}
 which has $X$ as the target with all stringy excitations supressed \cite{Wi2}.

%
It is convenient to use the sigma model variables $\bo{\theta}_i=\pa/\pa z^i $, $\bo{\eta}^{\b{j}}=\d z^{\b{j}}$
so that an element $\alpha$ of $B^{p,q}$ is expressed as
\begin{equation}
 \alpha=\frac{1}{p!q!}
\sum
  \alpha^{i_1\cdots i_p}_{\b{j}_1\cdots\b{j}_q}(z,\b{z})\,
  \bo{\theta}_{i_1}\cdots \bo{\theta}_{i_p}
  \bo{\eta}^{\b{j}_1}\cdots \bo{\eta}^{\b{j}_q}.
\end{equation}
For $\alpha\in B^{p,q}$, $|\alpha|=p+q$ is called the ghost number of $\alpha$.



 \subsection{Space of the differential forms}
We have  also the space of differential forms on $X$:
\begin{equation}
 \A=\bigoplus_{p,q=0}^{3} A^{p,q}, \quad
  A^{p,q}=\Gamma(X,\textstyle{\bigwedge}^p T^{\ast}X\otimes \textstyle{\bigwedge}^q\overline{T^{\ast}X}).
\end{equation}
We define the Hermite metric on $\A$  by the Hodge dual operator \ $\ast:A^{p,q}\to A^{3-q,3-p}$ as  
\begin{equation}
 (a|b)=\int_Xa\w \ast\b{b}, \quad a,b\in \A.
  \label{hermite}
\end{equation}
The exterior differential operators are given by
$\pa=\sum\d z^i \w \pa/\pa z^i$,
$\b{\pa}=\sum\d z^{\b{i}} \w \pa/\pa z^{\b{i}}$,
and their conjugates with respect to the metric (\ref{hermite}) are
$\pa^{\dag}=-\ast \b{\pa} \ast$,
$\b{\pa}^{\dag}=-\ast \pa \ast$.

These four operators are all nilpotent and their anti-commutators all vanish
except for two pairs,
 both of which give the  same Laplacian:
\begin{equation}
 \lap=\b{\pa}\b{\pa}^{\dag}+\b{\pa}^{\dag}\b{\pa}=\pa\pa^{\dag}+\pa^{\dag}\pa.
\end{equation}
The harmonic states are its kernel: $\mathbb{H}=\Ker \lap$,
the orthogonal complement of which with respect to the metric (\ref{hermite})
admits further orthogonal decompositions:
\begin{equation}
 \mathbb{H}^{\perp}=\Im\b{\pa}\oplus \Im\b{\pa}^{\dag}=\Im\pa\oplus \Im\pa^{\dag}.
 \label{hodgedecomposition}
\end{equation}
In particular  the subspace $\Ker\pa\subset \A$ 
is decomposed as
\begin{equation}
 \Ker\pa=\mathbb{H}\oplus \Im(\pa\b{\pa})\oplus \Im(\pa\b{\pa}^{\dag}),
\end{equation}
where $\mathbb{H}$ is called the physical states, $\Im(\pa\b{\pa})$ the trivial states, 
and $\Im(\pa\b{\pa}^{\dag})$ the unphysical states in \cite{Zw}
in the context of closed string field theory.

The Green's operator $\G$ is defined to be zero on $\mathbb{H}$, and $\lap^{-1}$ on $\mathbb{H}^{\perp}$ \cite{Kod}.
Let $\Pi:\A\to\H$ be the orthogonal projection onto the harmonic forms, then
\begin{equation}
 \id_{\A} =\Pi+\G\lap.
\label{idA}
\end{equation} 

Form the Lefshetz operators 
$L=\omega\w :A^{p,q}\to A^{p+1,q+1}$,  $\varLambda=L^{\dag}=\ast^{-1}\circ L\circ \ast$,
we can form  a $S\!U(2)$ algebra which is self-adjoint with respect to the metric (\ref{hermite}):
\begin{equation}
 J_1=\frac12(L+\varLambda), \quad
  J_2=\frac{1}{2\i}(\varLambda-L),\quad
  J_3=\frac12(3-P-Q),
  \label{angmom}
\end{equation}
where $(P,Q)|_{A^{p,q}}=(p,q)$, and $[J_a,J_b]=\i\sum \e_{abc}J_c$. 

The commutation relations among the differential operators and the Lefshetz operators above are known as
the Hodge-K\"{a}hler identities \cite{Kob}:
\begin{alignat}{2}
 [L,\pa]&=0, &\quad [L,\b{\pa}]&=0,
 \label{hodgekahlerone}\\
 [\varLambda,\pa^{\dag}]&=0, &\quad [\varLambda,\b{\pa}^{\dag}]&=0,\\
 [L,\pa^{\dag}]&=\i\b{\pa}, &\quad [L,\b{\pa}^{\dag}]&=-\i \pa,\\
 [\varLambda,\pa]&=\i\b{\pa}^{\dag}, &\quad [\varLambda,\b{\pa}]&=-\i \pa^{\dag}.
 \label{hodgekahlerfour}
\end{alignat}

Using the holomorphic three-form $\varOmega$, we can define 
 an isomorphism $\rho:\B\to \A$ of vector spaces \cite{Hu}
 \begin{align}
  \rho(1)&=s\d z^1\w\d z^2\w \d z^3, \label{isomone} \\
  \rho(\bo{\theta}_i)&=\tfrac12 s
  \tsum_{jk}
  \e_{i\!jk}\d z^j\w\d z^k,  \\
  \rho(\bo{\theta}_i\bo{\theta}_j)&=-s\tsum_k
  \e_{i\!jk}\d z^k, \\
  \rho(\bo{\theta}_1\bo{\theta}_2\bo{\theta}_3)&=-s,
  \label{isomfour}
 \end{align}
 where $\e_{ijk}$ is the Levi-Civita symbol.
Note that $\rho$ maps $B^{p,q}$ to $A^{3-p,q}$, so that it inverts the Grassmann parity. 

The significance of the isomorphism $\rho$ is that
we can convert each linear operator $f$ on $\A$ to
 that of $\B$ by $\hat{f}:=\rho^{-1}\circ f\circ \rho$.
 In particular, we give the special names for the operators below
  \begin{alignat}{2}
  \Delta&=\hat{\pa}& :   B^{p,q}&\to B^{p-1,q}, \label{delta}\\
 S&=\widehat{\pa^{\dag}} &:   B^{p,q}&\to B^{p+1,q}, \label{es}\\
 R&=\widehat{\b{\pa}^{\dag}}  &:   B^{p,q}&\to B^{p,q-1}. \label{ar}
  \end{alignat}
  In fact, $\B$ itself has a Hermite metric, $\b{\pa}_{\B}$, which is the BRST operator of the topological string \cite{Wi1},
  and its conjugate $\b{\pa}^{\dag}_{\B}$ \cite{Kod,Kob}.
  Note that here  we are  temporarily using subscripts in order to distinguish $\b{\pa}_{\B}$ from $\b{\pa}_{\A}$.
  Then we  find that
  \begin{equation}
  \b{\pa}_{\B}=-\widehat{\b{\pa}_{\A}}, \quad \b{\pa}^{\dag}_{\B}=-e^{\sigma} \circ R\circ e^{-\sigma}, 
  \end{equation}
  where $\sigma$ is the scalar function defined in (\ref{scalar}) \cite{Hu}.

  \subsection{Kodaira-Spencer gravity action}
  Let us  define the trace map  $\Tr:\B\to \C$ (complex numbers) by
  \begin{equation}
   \Tr(\alpha)=\int_X\rho(\alpha)\w \varOmega, \quad \alpha\in \B.
  \end{equation}
  It is clear that $\Tr(\alpha)\ne 0$ only if $\alpha\in B^{3,3}$.

\begin{lemma}
 Let $\alpha\in B^{p,q}$, $\beta\in B^{3-p,3-q}$, \ then \
 $\rho(\alpha)\w\rho(\beta)=(-1)^{q+1}\rho(\alpha\w\beta)\w\varOmega$.
\end{lemma}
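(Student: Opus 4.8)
The plan is to reduce the identity to monomials, isolate a single ``top-degree contraction identity'' for $\varOmega$, and then keep track of the Koszul signs produced by the Grassmann reorderings. Both sides are bilinear over the smooth functions $C^\infty(X)$ and the prefactor $(-1)^{q+1}$ is constant on $B^{p,q}\times B^{3-p,3-q}$, so it suffices to check the identity on the local frame of monomials $\alpha=\bo\theta_I\bo\eta^{\bar J}$ and $\beta=\bo\theta_K\bo\eta^{\bar L}$, with $I,K$ increasing multi-indices in $\{1,2,3\}$ ($|I|=p$, $|K|=3-p$) and $\bar J,\bar L$ increasing multi-indices in $\{\bar1,\bar2,\bar3\}$ ($|J|=q$, $|L|=3-q$). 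Since $\rho$ preserves the anti-holomorphic factor, $\rho(\bo\theta_M\bo\eta^{\bar N})=\rho(\bo\theta_M)\w\bo\eta^{\bar N}$ with $\bo\eta^{\bar j}=\d z^{\bar j}$, both sides vanish unless $K$ is the complement of $I$ and $\bar L$ the complement of $\bar J$: if $I\cap K\ne\emptyset$ then $\alpha\w\beta=0$, while $\rho(\bo\theta_I)$ and $\rho(\bo\theta_K)$ share a repeated $\d z^i$ so that $\rho(\alpha)\w\rho(\beta)=0$ as well, and likewise for the anti-holomorphic indices. Hence I may assume $I\sqcup K=\{1,2,3\}$ and $\bar J\sqcup\bar L=\{\bar1,\bar2,\bar3\}$.

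The heart of the matter is the contraction identity
\[
\rho(\bo\theta_I)\w\rho(\bo\theta_K)=\rho(\bo\theta_I\bo\theta_K)\,\varOmega,\qquad I\sqcup K=\{1,2,3\},
\]
which encodes that $\rho$ is contraction of a polyvector into $\varOmega$ and that, in complex dimension three, wedging the two complementary contractions reproduces a multiple of $\varOmega$ with no residual sign. I would establish it on the frame, using $\rho(\bo\theta_I\bo\theta_K)=\operatorname{sgn}(I,K)\,\rho(\bo\theta_1\bo\theta_2\bo\theta_3)=-\operatorname{sgn}(I,K)\,s$ from \eqref{isomfour} and the explicit forms \eqref{isomone}--\eqref{isomfour} for the left side. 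Because reordering $I,K$ multiplies both sides by the same shuffle sign $\operatorname{sgn}(I,K)$, one representative check suffices, e.g. $I=\{1\}$, $K=\{2,3\}$, where $\rho(\bo\theta_1)\w\rho(\bo\theta_2\bo\theta_3)=-s^2\,\d z^1\w\d z^2\w\d z^3=-s\,\varOmega=\rho(\bo\theta_1\bo\theta_2\bo\theta_3)\,\varOmega$.

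It remains to assemble the Koszul signs. On the left I group the holomorphic and anti-holomorphic factors of $\rho(\alpha)\w\rho(\beta)=\rho(\bo\theta_I)\w\bo\eta^{\bar J}\w\rho(\bo\theta_K)\w\bo\eta^{\bar L}$, commuting the degree-$q$ form $\bo\eta^{\bar J}$ past the degree-$p$ form $\rho(\bo\theta_K)$ at the cost of $(-1)^{qp}$, and then collapse the holomorphic pair by the contraction identity. On the right I reorder $\alpha\w\beta=\bo\theta_I\bo\eta^{\bar J}\bo\theta_K\bo\eta^{\bar L}$ into $\bo\theta$'s followed by $\bo\eta$'s, moving $\bo\eta^{\bar J}$ past $\bo\theta_K$ (all generators odd) at the cost of $(-1)^{q(3-p)}$, apply $\rho$, and wedge with $\varOmega$, which contributes the sign of $\d z^{\bar1}\w\d z^{\bar2}\w\d z^{\bar3}\w\varOmega=-\,\varOmega\w\d z^{\bar1}\w\d z^{\bar2}\w\d z^{\bar3}$. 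The shuffle signs $\operatorname{sgn}(I,K),\operatorname{sgn}(\bar J,\bar L)$ and the common nonzero top-form factor occur identically on the two sides, so the whole statement reduces to the scalar relation
\[
-(-1)^{qp}=(-1)^{q+1}(-1)^{q(3-p)},
\]
which holds because $(-1)^{q+1+q(3-p)}=(-1)^{1-qp}=-(-1)^{qp}$. I expect the sign accounting to be the only real obstacle: since $\rho$ reverses Grassmann parity, the anti-holomorphic generators must be transported past factors of opposite parity on the two sides, and it is precisely the congruence $q+1+q(3-p)\equiv 1-qp \pmod 2$ that fuses the separate Koszul contributions into the single advertised prefactor $(-1)^{q+1}$.
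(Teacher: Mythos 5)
Your proof is correct and takes essentially the same route as the paper's: both reduce to the monomial frame and chase Koszul signs through the explicit formula $\rho(\bo{\theta}_I)=(-1)^{|I|(|I|-1)/2}\e_{I,I^{\ast}}s\,\bo{\eta}^{I^{\ast}}$, the only organizational difference being that you package the holomorphic part as the contraction identity $\rho(\bo{\theta}_I)\w\rho(\bo{\theta}_K)=\rho(\bo{\theta}_I\bo{\theta}_K)\,\varOmega$ while the paper computes $\rho(\alpha)\w\rho(\beta)$, $\alpha\w\beta$ and $\rho(\alpha\w\beta)$ in components simultaneously. One small imprecision: ``one representative check suffices'' should be one representative for each $p=|I|\in\{0,1,2,3\}$, since reordering within a fixed partition never changes $p$; the remaining cases $p=0,2,3$ are equally immediate, so nothing actually fails.
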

\begin{proof}
 We use ordered multi-indices to write the elements
 \begin{equation*}
  \alpha=\sum_{I,J}^{<}\alpha^{I}_{\b{J}} \bo{\theta}_{I}\bo{\eta}^{\b{J}}\in B^{p,q}, \quad
\beta=\sum_{K,L}^{<}\beta^{K}_{\b{L}}\bo{\theta}_{K}\bo{\eta}^{\b{L}}\in B^{3-p,3-q},
 \end{equation*}
 where each multi-index has the form $I=(i_1,\dots,i_p)$,
 $i_1<i_2<\cdots<i_p$.
 For $I=(i_1,\dots,i_p)$, let $I^{\ast}=(i^{\ast}_1,\dots, i^{\ast}_{3-p})$ be the complementary multi-index, i.e.,
$\{i_1,\cdots,i_p,i^{\ast}_1,\dots, i^{\ast}_{3-p}\}=\{1,2,3\}$ as an unordered set.
 Then we have
 $\rho(\bo{\theta}_I)=(-1)^{|I|(|I|-1)/2} \e_{I,I^{\ast}} s\bo{\eta}^{I^{\ast}}$, where $\bo{\eta}^i=\d z^i$,

Simple computation shows that
   \begin{align*}
  \rho(\alpha)\w\rho(\beta)
  &=(-1)^{pq}s
   \sum_{I,J}^{<} \e_{I,I^{\ast}}\e_{J,J^{\ast}} \alpha^{I}_{\b{J}} \,\beta^{I^{\ast}}_{\b{J}^{\ast}}
   \bo{\eta}^{\b{1}}\bo{\eta}^{\b{2}}\bo{\eta}^{\b{3}}\w\varOmega,\\
  \alpha\w\beta
&  =(-1)^{q+pq}\sum_{I,J}^{<}
  \e_{I,I^{\ast}}\e_{J,J^{\ast}}
   \alpha^I_{\b{J}}\,\beta^{I^{\ast}}_{\b{J}^{\ast}}
   \bo{\theta}_1\bo{\theta}_2\bo{\theta}_3
  \bo{\eta}^{\b{1}}\bo{\eta}^{\b{2}}\bo{\eta}^{\b{3}}, \\
  \rho(\alpha\w\beta)&=(-1)^{1+q+pq}s
  \sum_{I,J}^{<}
  \e_{I,I^{\ast}}\e_{J,J^{\ast}}
   \alpha^I_{\b{J}}\, \beta^{I^{\ast}}_{\b{J}^{\ast}}
\bo{\eta}^{\b{1}}\bo{\eta}^{\b{2}}\bo{\eta}^{\b{3}}.
 \end{align*}
 \end{proof}

An immediate corollary is the useful formula 
\begin{equation}
 \Tr(\alpha\w\beta)=(-1)^{q+1}\int_X\rho(\alpha)\w\rho(\beta), \quad
  \alpha\in B^{p,q}, \ \beta\in B^{3-p,3-q},
\end{equation}
from which we obtain the  ``integration by parts'' formulas below:
  \begin{align}
 \Tr(\b{\pa}\alpha\w \beta)&=(-1)^{|\alpha|+1}\Tr(\alpha\w\b{\pa}\beta), \label{pidelbar}\\
 \Tr(\Delta\alpha\w\beta)&=(-1)^{|\alpha|}\Tr(\alpha\w\Delta\beta), \label{pidelta}\\
 \Tr(S\alpha\w\beta)&=(-1)^{|\alpha|+1}\Tr(\alpha\w S\beta), \label{pis}\\
 \Tr(R\alpha\w\beta)&=(-1)^{|\alpha|}\Tr(\alpha\w R\beta). \label{pir}
\end{align}

Now we are in a position  to write down the action of the KS gravity theory \cite{BCOV}.
For simplicity, we here restrict ourselves to the classical theory, which means that the ``string field''
$\va$ takes values in $B^{1,1}$.
It is known  that in closed string field theory, to obtain a gauge invariant action,
we must put the subsidiary condition
$\Delta \va=0$ on the state $\va\in \B$,  
so that the configuration space must be reduced to
\begin{equation}
 B^{1,1}\cap \Ker \Delta=\rho^{-1}(\H^{2,1})\oplus \Im \Delta \cap B^{1,1},
  \label{subsidiary}
\end{equation}
where the first direct summand  of the right hand side is called massless, while the second  massive \cite{BCOV}.
According to (\ref{subsidiary}), we decompose $\va\in B^{1,1}\cap \Ker\Delta$ as
$\va=\va_1\oplus \varPhi$, where $\va_1\in \rho^{-1}(\H^{2,1})$ and $\varPhi\in \Im\Delta$.
Then the action evaluated at $\va$ is given by
\begin{align}
 S_X[\va_1|\varPhi]=\Tr\left(\frac16 (\va_1+\varPhi)\w(\va_1+\varPhi)\w(\va_1+\varPhi)
 -\frac12\b{\pa}\Delta^{-1}\varPhi\w \varPhi\right).
 \label{ksaction}
\end{align}
We note that $\b{\pa}$ above is the BRST operator of the string theory.
From the Hodge decomposition (\ref{hodgedecomposition}) and the partial integration (\ref{pidelbar}),
we can see that the value of  (\ref{ksaction}) does not depend on the choice of $\Delta^{-1}\varPhi$. 
Note that  from (\ref{idA}) we can write $\va_1=\hat{\Pi}(\va)$ and
 $\varPhi=\Delta S\hat{\G}(\va)$.

 We note  that in (\ref{ksaction})  the massless mode $\va_1$ does not has a kinetic term
 so that it acts as a background field \cite{BCOV}.
 Let $\mathcal{M}$ be the moduli space of the complex structures of Calabi-Yau manifolds, and $T\mathcal{M}$
 be its holomorphic tangent bundle.
 Then the complex structure of $X$ determines a point $[X]\in \mathcal{M}$ and 
$\va_1$ an element of $T_{[X]}\mathcal{M}$.
Thus the total background of the action (\ref{ksaction})
 can be regarded as the total space of $T\mathcal{M}$ \cite{BeSa}.
 It is also interesting to note that integration of the holomorphic anomaly equation
 is performed by quantization of the massless modes \cite{BCOV,YY,ASYZ}.

 The equation of motion, which is known as the  Kodaira-Spencer equation,
 construction of its solutions,  and deformations of the action (\ref{ksaction}),
 we will discuss  in later sections.

 \section{Algebraic Structure of  $\bo{\B}$}
 \subsection{Linear operators in sigma model variables}
 We have seen in the previous section that the space of differential forms $\A$
 have  a rich set of  linear operators on it,
 which we can transfer to those on $\B$
 via the isomorphism $\rho:\B\to \A$ (\ref{isomone}--\ref{isomfour}).
 In this subsection, we describe these linear operators on $\B$ as those acting on
 the bosonic variables $z^i,z^{\b{j}}$ as well as the fermionic ones $\bo{\theta}_i, \bo{\eta}^{\b{j}}$.
 
 \subsubsection{The Hodge dual operator}
 The action of the Hodge dual operator 
$\hat{\ast}: B^{p,q}\to B^{q,p}$ 
clearly corresponds to the exchange of  $\bo{\theta}$ and $\bo{\eta}$.
 Form this observation, we find  the following description.
 
 First define a map  $\kappa$ between the sigma model variables by
 \begin{equation}
  \kappa(\bo{\theta}_i)=-\sum g_{i\b{j}}\, \bo{\eta}^{\b{j}}, \quad
  \kappa(\bo{\eta}^{\b{j}})=\sum g^{i\b{j}}\bo{\theta}_i. 
 \end{equation}
Then we extend $\kappa$ to $\B$ as a super-algebra homomorphism.
Finally the action of the transferred Hodge dual operator on a homogeneous element $\alpha\in \B$ is given by
\begin{equation}
 \hat{\ast}\alpha=\i e(|\alpha|)\kappa(\alpha),
  \label{hodgedual}
\end{equation}
where $e(n)$ is the ``phenomenological'' sign factor defined by
$e(n)=(-1)^{(n+1)(n+2)/2}$.
We can easily see that \eqref{hodgedual} satisfies 
 $\hat{\ast}^2\alpha=(-1)^{|\alpha|+1}\alpha$.

The operator $\hat{\ast}$ is well-defined for the wedge product of $\B$.
Indeed from the formula $e(n)e(m)e(n+m)=(-1)^{nm+1}$,  we have
\begin{equation}
 \hat{\ast}(\alpha\w\beta)=\i\, (-1)^{|\alpha|\cdot|\beta|}\,\hat{\ast}(\alpha)\w \hat{\ast}(\beta).
\end{equation}

 \subsubsection{Four differential operators}
 \label{fourdiffop}
 The exterior differentials and their Hermite conjugates viewed in $\B$ are given by
\begin{align}
 \b{\pa}&=\sum \bo{\eta}^{\b{i}}\frac{\pa}{\pa z^{\b{i}}},
 \label{fourone}\\
 \Delta&= s^{-1}\circ \sum\frac{\pa}{\pa z^i}\frac{\pa}{\pa \bo{\theta}_i}\circ s
 \label{fourtwo}\\
 S&=-\sum g^{\b{j}k}\bo{\theta}_k\left(\frac{\pa}{\pa z^{\b{j}}}
 -\Gamma^{\b{i}}_{\b{j}\b{l}}\,\bo{\eta}^{\b{l}}\frac{\pa}{\pa \bo{\eta}^{\b{i}}}\right),
 \label{fourthree}\\
 R&=e^{-\sigma}\circ \sum g^{\b{i}j}\left(\frac{\pa}{\pa z^j}
 +\Gamma^k_{jl}\,\bo{\theta}_k\frac{\pa}{\pa \bo{\theta}_l}\right)
 \frac{\pa}{\pa \bo{\eta}^{\b{i}}} \circ e^{\sigma},
 \label{fourfour}
\end{align}
where 
$\Gamma^i_{jk}=\sum_{\b{l}} g^{i\b{l}}\pa_jg_{k\b{l}}$, $\pa_j=\pa/\pa z^j$,
 is the Christoffel symbol, and it is  unavoidable to use sometimes Einstein's summation rule.

 \subsubsection{ Lefshetz operators}
 The Lefshetz operators,  $\hat{L}:B^{p,q}\to B^{p-1,q+1}$,
 $\hat{\varLambda}:B^{p,q}\to B^{p+1,q-1}$,  are given by the following differential operators
 \begin{align}
  \hat{L}&=\i \sum g_{i\b{j}}\,\bo{\eta}^{\b{j}}\frac{\pa}{\pa \bo{\theta}_i},
  \label{el}\\
  \hat{\varLambda}&=-\i \sum g^{i\b{j}}\bo{\theta}_i\frac{\pa}{\pa \bo{\eta}^{\b{j}}}.
  \label{lambda}
 \end{align}
 Let us check the first formula \eqref{el}.
Pick up an element of $B^{p,q}$ and differentiate it by $\bo{\theta}_i$
\begin{equation*}
 \alpha=\sum_{I,J}^{<}\alpha^I_{\b{J}}\,\bo{\theta}_I\bo{\eta}^{\b{J}}\in B^{p,q},\quad
\frac{\pa}{\pa\bo{\theta}_i} \alpha=\sum_{I,J:i\in I}^{<}\alpha^I_{\b{J}}\,\e(I_i)\bo{\theta}_{I_i}\bo{\eta}^{\b{J}},
\end{equation*}
 where $I_i=I\backslash\{i\}$, and $\e(I_i)$ is a sign factor.
 Then we have
 \begin{equation}
  \sum_{i,j}\i g_{i\b{j}}\,\bo{\eta}^{\b{j}}\frac{\pa}{\pa\bo{\theta}_i} \alpha
   =(-1)^{p-1}\sum_{i,j}\sum_{I,J:i\in I, j\not\in J}^{<}\i g_{i\b{j}} \,\alpha^I_{\b{J}}\,\e(I_i)\bo{\theta}_{I_i}\bo{\eta}^{\b{j}}\bo{\eta}^{\b{J}}.
   \label{upper}
 \end{equation}
 On the other hand, the action of $\hat{L}$ is computed as
 \begin{align}
  \rho(\alpha)&=(-1)^{p(p-1)/2}\sum_{I,J}^{<} s \alpha^I_{\b{J}}\, \e_{I,I^{\ast}}\bo{\eta}^{I^{\ast}}\bo{\eta}^{\b{J}}, \nonumber\\
   \omega\w\rho(\alpha)&=(-1)^{p(p-1)/2+3-p} \sum_{i,j}\sum_{I,J:i\in I, j\not\in J}^{<}
   \i g_{i\b{j}}\, s \alpha^I_{\b{J}}\, \e_{I,I^{\ast}} \bo{\eta}^i\bo{\eta}^{I^{\ast}}\bo{\eta}^{\b{j}}\bo{\eta}^{\b{J}},\nonumber\\
    \rho^{-1}( \omega\w\rho(\alpha))
   &= \sum_{i,j}\sum_{I,J:i\in I, j\not\in J}^{<}
   \i g_{i\b{j}}\,  \alpha^I_{\b{J}}\, \e_{I,I^{\ast}}
   \e'(I_i)\bo{\theta}_{I_i}
   \bo{\eta}^{\b{j}}\bo{\eta}^{\b{J}},
   \label{lower}
 \end{align}
 where $\e'(I_i)$ is the sign factor which satisfies
 $\rho(\bo{\theta}_{I_i})=(-1)^{(p-1)(p-2)/2} s\, \e'(I_i)\bo{\eta}^i\bo{\eta}^{I^{\ast}}$.

 If $i\in I$ is the $k$th element, then the sign factors are $\e(I_i)=(-1)^{k-1}$,
 $\e'(I_i)=(-1)^{p-k}\e_{I,I^{\ast}}$,
 which shows that \eqref{upper} and \eqref{lower} coincide with each other.

Then the relation $\hat{\ast}\circ \hat{\varLambda}=\hat{L}\circ \hat{\ast}$ yields the second formula \eqref{lambda}. 
 We can also verify the Hodge-K\"{a}hler identities (\ref{hodgekahlerone}--\ref{hodgekahlerfour})
 among the converted operators $\hat{L}, \hat{\varLambda}$, and $\b{\pa}, \Delta, S,R$.

 The configuration space $\B$ of the KS gravity
 is a subspace of  the state space of the topological B sigma model
 with stringy excitations omitted,
so that  $\B$ has some remnants of the $N=2$ supersymmetry algebra,
among which are the left/right $U(1)$ charges
 $(1/2)(\pm \i(\hat{L}-\hat{\varLambda})+P_{\B}+Q_{\B})$ \cite{BCOV},
the sum of which yields the ghost number operator
 \begin{equation}
  P_{\B}+Q_{\B}=\sum\bo{\theta}_i\frac{\pa}{\pa \bo{\theta}_i}
   +\sum\bo{\eta}^{\b{j}}\frac{\pa}{\pa \bo{\eta}^{\b{j}}}.
 \end{equation}

 As $\widehat{Q_{\A}}=Q_{\B}$, $\widehat{P_{\A}}=3-P_{\B}$,
 the generators of the $S\!U(2)$ \eqref{angmom} become
 \begin{equation}
  \hat{J_1}=\frac12(\hat{\varLambda}+\hat{L}), \quad
   \hat{J_2}=\frac{1}{2\i}(\hat{\varLambda}-\hat{L}), \quad
   \hat{J_3}=\frac12(P_{\B}-Q_{\B}).
 \end{equation}
 Notice that these operators are of order one \cite{Ro},
 which means that they act on $\B$ as derivations with respect to the wedge product.
 Therefore it can be said that the wedge product of $\B$ conserves the $S\!U(2)$ symmetry.

\subsubsection{Hodge duality,  $\bo{SU(2)}$, and $\bo{B^{1,1}}$}
Note that the Hodge dual operator $\hat{\ast}$ maps each diagonal sector $B^{p,p}$ to itself
with the eigenvalues $\pm \i$.
It is easy to see that $\hat{\ast}$ acts as $-\i$ on $B^{0,0}$, and as $+\i$ on $B^{3,3}$.
 As for the most important sector $B^{1,1}$, we find the following fact:
\begin{lemma}
\label{Boneone}
 For $\alpha\in B^{1,1}$, $\hat{\ast}\alpha=+\i\alpha$ \ if and only if \
 $\hat{L}\alpha=0$.
\end{lemma}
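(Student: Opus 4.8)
The plan is to reduce each of the two conditions to an explicit statement about the coefficient tensor of $\alpha$, and to observe that they yield the \emph{same} statement. I would write a general element of $B^{1,1}$ as $\alpha=\sum_{i,j}\alpha^i_{\b{j}}\,\bo{\theta}_i\bo{\eta}^{\b{j}}$ and introduce the auxiliary tensor obtained by lowering the holomorphic index of $\alpha$ with the K\"{a}hler metric,
\begin{equation*}
 B_{\b{j}\b{n}}:=\sum_i g_{i\b{j}}\,\alpha^i_{\b{n}}.
\end{equation*}
I expect the whole lemma to collapse to the single assertion that $B_{\b{j}\b{n}}$ is symmetric in its two anti-holomorphic indices.

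First I would evaluate the Hodge-dual side. Since $|\alpha|=2$ the sign factor is $e(2)=(-1)^{6}=1$, so \eqref{hodgedual} gives $\hat{\ast}\alpha=\i\,\kappa(\alpha)$. Using that $\kappa$ is a super-algebra homomorphism together with the reordering $\bo{\eta}^{\b{k}}\bo{\theta}_l=-\bo{\theta}_l\bo{\eta}^{\b{k}}$, one finds
\begin{equation*}
 \kappa(\bo{\theta}_i\bo{\eta}^{\b{j}})=\sum_{k,l} g_{i\b{k}}\,g^{l\b{j}}\,\bo{\theta}_l\bo{\eta}^{\b{k}},
\end{equation*}
so the coefficient of $\bo{\theta}_l\bo{\eta}^{\b{k}}$ in $\hat{\ast}\alpha$ is $\i\sum_{i,j}g^{l\b{j}}g_{i\b{k}}\alpha^i_{\b{j}}=\i\sum_j g^{l\b{j}}B_{\b{k}\b{j}}$. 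Since $\alpha^l_{\b{k}}=\sum_j g^{l\b{j}}B_{\b{j}\b{k}}$ by the metric identity $\sum_{\b{j}}g^{l\b{j}}g_{i\b{j}}=\delta^l_i$, the eigenvalue equation $\hat{\ast}\alpha=+\i\alpha$ reads $\sum_j g^{l\b{j}}B_{\b{k}\b{j}}=\sum_j g^{l\b{j}}B_{\b{j}\b{k}}$ for all $l,\b{k}$; as $g^{l\b{j}}$ is invertible this is precisely $B_{\b{k}\b{j}}=B_{\b{j}\b{k}}$.

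Next I would compute the Lefshetz side directly from \eqref{el}. Since $(\pa/\pa\bo{\theta}_i)\alpha=\sum_n\alpha^i_{\b{n}}\bo{\eta}^{\b{n}}$, one gets $\hat{L}\alpha=\i\sum_{j,n}B_{\b{j}\b{n}}\,\bo{\eta}^{\b{j}}\bo{\eta}^{\b{n}}$; because $\bo{\eta}^{\b{j}}\bo{\eta}^{\b{n}}$ is antisymmetric in $(j,n)$, only the antisymmetric part of $B$ survives, so $\hat{L}\alpha=0$ if and only if $B_{\b{j}\b{n}}$ is symmetric. Comparing the two paragraphs shows both conditions are equivalent to the symmetry of $B$, proving the lemma in both directions at once. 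The only genuine obstacle is bookkeeping: getting the Grassmann reordering sign in $\kappa$ correct, verifying $e(2)=1$ so that no spurious $-\i$ contaminates the eigenvalue, and applying the identities $\sum_l g^{l\b{a}}g_{l\b{b}}=\delta^{\b{a}}_{\b{b}}$ and $\sum_{\b{a}}g^{l\b{a}}g_{i\b{a}}=\delta^l_i$ consistently when raising and lowering the holomorphic index.
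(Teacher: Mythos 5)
Your proposal is correct and follows essentially the same route as the paper's own proof: both arguments reduce the two conditions to the symmetry of the lowered-index tensor $\sum_i g_{i\b{j}}\alpha^i_{\b{k}}$ in its anti-holomorphic indices, by direct computation of $\hat{\ast}\alpha$ via $\kappa$ and of $\hat{L}\alpha$ via \eqref{el}. Your version merely makes the intermediate tensor $B_{\b{j}\b{n}}$ and the sign bookkeeping explicit, which the paper leaves implicit.
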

\begin{proof}
 Let us denote an element of $B^{1,1}$ by $\alpha=\sum\alpha^i_{\b{j}}\bo{\theta}_i\bo{\eta}^{\b{j}}$.
 Then $\hat{L}\alpha=\i \sum g_{i\b{k}}\alpha^i_{\b{j}}\bo{\eta}^{\b{k}}\bo{\eta}^{\b{j}}$, so that
 $\hat{L}\alpha=0$ $\Leftrightarrow$ $\sum g_{i\b{k}}\alpha^i_{\b{j}}=\sum g_{i\b{j}}\alpha^i_{\b{k}}$.
 On the other hand,  from  $\hat{\ast}\alpha=\i \sum g_{i\b{k}}g^{\b{j}l}\alpha^i_{\b{j}}\bo{\theta}_l\bo{\eta}^{\b{k}}$,
 we see that $\hat{\alpha}=+\i\alpha$ $\Leftrightarrow$
 $\sum g_{i\b{k}}g^{\b{j}l}\alpha^i_{\b{j}}=\alpha^l_{\b{k}}$. 
\end{proof}
In other words,  $\hat{\ast}\alpha=+\i\alpha$  if $\alpha$ belongs to a trivial representation (singlet),
and $\hat{\ast}\alpha=-\i\alpha$ if $\alpha$ belongs to an adjoint representation (triplet) of $S\!U(2)$.

\subsection{dGBV algebras on  $\bo{\B}$}
\subsubsection{Algebra associated with the pair  $\bo{(\Delta,\b{\pa})}$}
We recall here the dGBV algebra on $\B$ discovered in \cite{BaKo}.
A good reference of  the subject is \cite{Man}.
Let us define the odd bracket on $\B$ by
\begin{equation}
 [\alpha\bullet\beta]=(-1)^{|\alpha|}\Delta(\alpha\w\beta)-(-1)^{|\alpha|}\Delta(\alpha)\w\beta-\alpha\w\Delta(\beta),
  \label{oddbracket}
\end{equation}
where the symbol  $\bullet$ carries the ghost number minus one.
It should be remarked here that if $\alpha,\beta\in \Ker\Delta$,
then $[\alpha\bullet\beta]=(-1)^{|\alpha|}\Delta(\alpha\w\beta)\in \Im\Delta$.

This bracket satisfies the following relations:
\begin{align}
 [\beta\bullet\alpha]&=-(-1)^{(|\alpha|+1)(|\beta|+1)} \, [\alpha\bullet\beta],\\
 [\alpha\bullet[\beta\bullet\gamma]]&=[[\alpha\bullet\beta]\bullet\gamma]
 +(-1)^{(|\alpha|+1)(|\beta|+1)} \, [\beta\bullet[\alpha\bullet\gamma]],
 \label{jacobi}\\
 [\alpha\bullet(\beta\w\gamma)]&=[\alpha\bullet\beta]\w\gamma
 +(-1)^{(|\alpha|+1)|\beta|}\, \beta\w[\alpha\bullet\gamma],
 \label{seventermone}\\
 \Delta[\alpha\bullet\beta]&=[\Delta\alpha\bullet\beta]+(-1)^{|\alpha|+1} \, [\alpha\bullet\Delta \beta],\\
 \b{\pa}[\alpha\bullet\beta]&=[\b{\pa}\alpha\bullet\beta]+(-1)^{|\alpha|+1} \, [\alpha\bullet\b{\pa} \beta].
 \label{derivation}
\end{align}
The first two relations define a structure of the odd Lie algebra on $\B$,
the third means that $[\alpha\bullet \ ]$ is a derivation with respect to the wedge product,
while the last two show that both $\Delta$ and $\b{\pa}$ are derivations with respect to the odd bracket.

A most useful presentation of the bracket \eqref{oddbracket} is given by the odd Poisson bracket
\begin{equation}
 [\alpha\bullet\beta]=\sum\left(
			   \alpha\frac{\overset{\leftarrow}{\pa}}{\pa z^i}\w \frac{\overset{\rightarrow}{\pa}}{\pa \bo{\theta}_i}\beta
			   -\alpha\frac{\overset{\leftarrow}{\pa}}{\pa \bo{\theta}_i}\w \frac{\overset{\rightarrow}{\pa}}{\pa z^i}\,\beta
				  \right).
 \label{oddpoisson}
\end{equation}
Note that the right hand side does not depend on the holomorphic three-form $\varOmega$.

Recall that $\B$ is a hybrid of polyvectors and differential forms \eqref{b},
and polyvectors  close themselves under the Schouten-Neienhuis bracket \cite{Ro,Fu}.
This puts to $\B$ another product structure; it is essentially the same as 
the odd bracket \eqref{oddbracket}.
To see this, let us use multi-indices to  present  two homogeneous elements of $\B$
 \begin{equation}
  \alpha=\sum_{|I|=p,|J|=q}^{<}\alpha^I_{\b{J}} \,\bo{\theta}_{I} \bo{\eta}^{\b{J}} \in B^{p,q}, \quad
\beta=\sum_{|K|=r,|L|=s}^{<}\beta^K_{\b{L}} \,\bo{\theta}_{K} \bo{\eta}^{\b{L}} \in B^{r,s},
 \end{equation}
 Then the odd bracket can be written as
 \begin{equation}
  [\alpha\bullet\beta]=-(-1)^{q(r+1)}\sum_{I,J,K,L}^{<}
   [\![\alpha^I_{\b{J}} \bo{\theta}_{I}, \beta^K_{\b{L}} \bo{\theta}_{K}]\!]_{\mathrm{sn}}\,
    \bo{\eta}^{\b{J}} \bo{\eta}^{\b{L}},
   \label{tian}
 \end{equation}
where $[\![ \ , \  ]\!]_{\mathrm{sn}}$ is the Schouten-Neienhuis bracket.
The formula \eqref{tian} is known as the generalized Tian's Lemma.
The original Tian's Lemma refers to the case of $p=r=1$ \cite{Ti},
where the Schouten-Neienhuis bracket reduces to the Lie bracket of vector fields.

\subsubsection{Algebra associated with  the pair $\bo{(R,S)}$}
The dGVB algebra in the previous subsubsection has been constructed using the two operators:
$\Delta$ of ghost number minus one and of order two \cite{Ro} and $\b{\pa}$ of ghost number one and of order one,
which satisfy $\Delta^2=0$, $\b{\pa}^2=0$ and $\Delta\b{\pa}+\b{\pa}\Delta=0$.

We have another pair $(R, S)$ which has the same properties as $(\Delta,\b{\pa})$,
which leads us to define the following odd bracket  
\begin{equation}
 [\alpha\dia\beta]=(-1)^{|\alpha|}R(\alpha\w\beta)-(-1)^{|\alpha|}R(\alpha)\w\beta-\alpha\w R(\beta).
  \label{anotheroddbracket}
\end{equation}

The two odd brackets are related by the Hodge dual operator as
\begin{equation}
 [\alpha\dia\beta]=\i\, (-1)^{|\alpha|\cdot|\beta|+1}\, \hat{\ast}[\hat{\ast}\alpha\bullet\hat{\ast}\beta],
  \label{twobracketshodge}
\end{equation}
from which we can show the dGBV algebra relations:
\begin{align}
 [\beta\dia \alpha]&=-(-1)^{(|\alpha|+1)(|\beta|+1)} \, [\alpha\dia\beta],\\
 [\alpha\dia[\beta \dia\gamma]]&=[[\alpha\dia\beta]\dia\gamma]
 +(-1)^{(|\alpha|+1)(|\beta|+1)} \, [\beta\dia[\alpha\dia\gamma]],\\
 [\alpha\dia(\beta\w\gamma)]&=[\alpha\dia\beta]\w\gamma
 +(-1)^{(|\alpha|+1)|\beta|}\, \beta\w[\alpha\dia\gamma],
 \label{seventermtwo}\\
 R[\alpha\dia\beta]&=[R\alpha\dia\beta]+(-1)^{|\alpha|+1} \, [\alpha\dia R \beta],\\
 S[\alpha\dia\beta]&=[S\alpha\dia\beta]+(-1)^{|\alpha|+1} \, [\alpha\dia S \beta].
\end{align}

We have also obtained the following formulas
\begin{align}
 &\Delta[\alpha\dia\beta]-[\Delta\alpha\dia\beta]+(-1)^{|\alpha|}[\alpha\dia \Delta\beta] \nonumber\\
 =-&R[\alpha\bullet\beta]+[R\alpha\bullet\beta]-(-1)^{|\alpha|}[\alpha\bullet R\beta],
 \label{formulalaplacian}\\
 &S[\alpha\bullet\beta]-[S\alpha\bullet\beta]+(-1)^{|\alpha|}[\alpha\bullet S\beta]\nonumber\\
 =-&\b{\pa}[\alpha\dia \beta]+[\b{\pa}\alpha\dia\beta]-(-1)^{|\alpha|}[\alpha\dia\b{\pa}\beta]\nonumber\\
 =(-&1)^{|\alpha|}\left(\hat{\lap}(\alpha\w\beta)-\hat{\lap}\alpha\w\beta-\alpha\w\hat{\lap}\beta\right),
 \label{laplacian}
\end{align}
where we recall that the Laplacian is given by
$\hat{\lap}=-\b{\pa}R-R\b{\pa}=\Delta S+S\Delta$.

The odd Poisson bracket form is given by
  \begin{equation}
   [\alpha\dia\beta]
    =\sum    g^{\b{i}j}\left(
\alpha\left(\frac{\overset{\leftarrow}{\pa}}{\pa z^j}+\frac{\overset{\leftarrow}{\pa}}{\pa \theta_l}\theta_k\Gamma^k_{jl}\right)
 \w \frac{\overset{\rightarrow}{\pa}}{\pa \eta^{\b{i}}}\beta
-\alpha\frac{\overset{\leftarrow}{\pa}}{\pa \eta^{\b{i}}}  \w
 \left(\frac{\overset{\rightarrow}{\pa}}{\pa z^j}+\Gamma^k_{jl}\theta_k\frac{\overset{\rightarrow}{\pa}}{\pa \theta_l}\right)\beta
\right),
\label{oddpoissonR}
  \end{equation}
the right hand side of which is independent of the scalar field $\sigma$.

  The dGBV algebras associated with the space of differential forms $\A$,
  which are expected to be a mirror dual to those of $\B$ when the sigma model instanton effects
  are incorporated,  have been studied in \cite{Mer,CZ,P}.

 \section{KS Action and its Deformation}
 The algebraic tools developed in the previous sector enables a systematic treatment
 of the classical KS gravity action
\begin{equation}
 S_X[\varphi_1|\varPhi]=\Tr\left(\frac16\varphi\w\varphi\w\varphi
 -\frac12\b{\pa}\Delta^{-1}\varPhi\w \varPhi\right),
\label{ksactionagain}
 \end{equation}
 where we recall that 
$\varphi=\varphi_1+\varPhi$ is an element of $B^{1,1}\cap \Ker\Delta$,
the massless part $\varphi_1 =\hat{\Pi}(\varphi)\in \rho^{-1}(\mathbb{H}^{2,1})$
is a background field,
and the massive part $\varPhi\in \Im\Delta\cap B^{1,1}$ carries dynamical degrees of freedom.

We obtain the equation of motion of $\varphi$ from variation of the action \eqref{ksactionagain}
\begin{equation}
 \b{\pa}\varphi+\frac12[\varphi\bullet\varphi]=0.
  \label{kseq}
\end{equation}
In components,  $\va=\sum\va^i_{\b{j}}\bo{\theta}_i\bo{\eta}^{\b{j}}$,  it reads
\begin{equation}
 \pa_{\b{j}}\va^i_{\b{k}}-\pa_{\b{k}}\va^i_{\b{j}}
  +\sum_l \left(\va^l_{\b{j}}\pa_l\va^{i}_{\b{k}}-\va^l_{\b{k}}\pa_l\va^{i}_{\b{j}}\right)=0. 
\end{equation}
This is the famous Kodaira-Spencer equation \cite{Kod},
which describes deformation of complex structures on $X$.

The action has the gauge symmetry; the infinitesimal form of it is
\begin{equation}
 \delta \varPhi=\b{\pa}\xi+[\va\bullet \xi],
\end{equation}
where   $\xi\in B^{1,0}\cap \Ker\Delta$ is a gauge parameter.

\subsection{Solution to KS equation}
We construct a solution to the KS equation \eqref{kseq} following \cite{Ti,To}.

Let $\va_1\in \rho^{-1}(\mathbb{H}^{2,1})$ a massless mode. Then we claim that there is a series of massive modes
$\va_n\in \rho^{-1}(\Im(\pa\b{\pa}^{\dag}))\cap B^{1,1}$, $n\geq 2$, such that
 $\va=\sum_{n\geq 1}\va_n$ solves the KS equation \eqref{kseq}.
 In fact, we can solve order by order the equation for $\va_n$:
 \begin{equation}
  \b{\pa}\rho(\va_n)=\psi_n:=\frac12\,\rho\left(\sum_{i+j=n}[\va_i\bullet\va_j]\right).
   \label{pert}
 \end{equation}
For  $n=1$, \eqref{pert} is trivially satisfied by $\va_1$.  

For $n=2$,  \eqref{pert} becomes
 \begin{equation}
  \b{\pa}\rho(\va_2)=\psi_2=\rho([\va_1\bullet\va_1]).
   \label{perttwo}
 \end{equation}
From the definition of the odd bracket \eqref{oddbracket} and the fact $\va_1\in \Ker\Delta$,
we see that  $\psi_2\in \Im\pa$.
The formula \eqref{derivation} shows that   $\psi_2\in \Ker \b{\pa}$
because $\va_1\in \Ker \b{\pa}$.
Then $\pa\b{\pa}$-Lemma implies that $\psi_2\in \Im(\pa\b{\pa})\subset \Im\b{\pa}$,
from which we see that \eqref{perttwo} has solutions.
In particular the gauge condition $\rho(\va_2)\in \Im\b{\pa}^{\dag}$ picks up the unique one:
 $\rho(\va_2)=\b{\pa}^{\dag}\G\psi_2 \in \Im(\pa\b{\pa^{\dag}})$.

Suppose that we have solutions $\va_k=\b{\pa}^{\dag}\G\psi_n\in \rho^{-1}\Im(\pa\b{\pa}^{\dag})$
for $2\leq k\leq n$.
Then we see that $\psi_{n+1}\in \Im\pa$ because each term $[\va_i\bullet\va_j]$
of $\psi_{n+1}$ belongs to $\Im\Delta$ by the inductive assumption.
We can also show that $\psi_{n+1}\in \Ker \b{\pa}$ since
\begin{equation}
 \b{\pa}\,\frac12\sum_{i+j=n+1}[\va_i\bullet\va_j]=\sum_{i+j=n+1}[\b{\pa}\va_i\bullet\va_j]
  =-\frac12\sum_{i+j=n+1}\sum_{k+l=i}[[\va_k\bullet\va_l]\bullet\va_j],
\end{equation}
 the right hand side of which vanishes due to the Jacobi identity of the odd bracket \eqref{jacobi}.
 Therefore again from the $\pa\b{\pa}$-Lemma, we see that $\psi_{n+1}\in \Im(\pa\b{\pa})\subset \Im(\b{\pa})$,
 and we obtain the $n+1$st solution
 $\rho(\va_{n+1})=\b{\pa}^{\dag}\G\psi_{n+1}$.
 Thus we have obtained a solution $\va=\sum_{n\geq 1}\va_n$ of the KS equation \eqref{kseq}.
Note that  $\va$ satisfies $\Delta\va=0$, $R\va=0$ by construction.
 For the proof of the convergence of the infinite sum $\sum_{n\geq 1}\va_n$ for sufficiently small $\va_1$s,
see \cite{It}.

Let us define the massive propagator \cite{BeSa} by
 \begin{equation}
  \P=\rho^{-1}\circ \b{\pa}^{\dag}\pa\G\circ \rho:B^{p,q}\to B^{p-1,q-1}.
   \label{massivepropagator}
 \end{equation}
 Then we get the recursion formula for $n\geq 2$:
 \begin{equation}
  \va_n=\frac12\sum_{i=1}^{n-1}\P(\va_i\w\va_{n-i}).
   \label{recursion}
 \end{equation}

\begin{proposition}
\label{propo}
 The solution $\va=\sum_{n\geq 1}\va_n$ of \eqref{kseq} constructed above satisfies 
 $\hat{\ast}\va=+\i\va$.
\end{proposition}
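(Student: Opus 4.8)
The plan is to reduce the statement to the single identity $\hat{L}\va=0$. Since every $\va_n$ lies in $B^{1,1}$, so does $\va=\sum_{n\geq 1}\va_n$, and Lemma~\ref{Boneone} tells us that $\hat{\ast}\va=+\i\va$ is equivalent to $\hat{L}\va=0$. I would establish the latter by proving $\hat{L}\va_n=0$ for every $n$ by induction, exploiting two structural facts recorded earlier: that $\hat{L}$ is an operator of order one, hence an (even) derivation of the wedge product, and that the recursion \eqref{recursion} assembles $\va_n$ from lower terms through a single composite operator.

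For the base case $n=1$, the form $\rho(\va_1)\in\H^{2,1}$ is harmonic of total degree $3=\dim_{\C}X$. The Lefschetz decomposition of $H^{3}$ reads $P^{3}\oplus L\,P^{1}$, where $P^{k}$ denotes primitive cohomology; on a Calabi--Yau threefold $H^{1}(X)=0$, so $P^{1}=0$ and every harmonic $(2,1)$-form is primitive. A primitive form of total degree $\dim_{\C}X$ is annihilated by $L=\omega\w\,\cdot$, whence $\hat{L}\va_1=\rho^{-1}L\rho(\va_1)=0$; equivalently $\va_1$ is an $S\!U(2)$ singlet, so $\hat{\ast}\va_1=+\i\va_1$.

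For the inductive step, assume $\hat{L}\va_k=0$ for all $k<n$. Writing the massive propagator \eqref{massivepropagator} in transferred form as $\P=R\,\Delta\,\hat{\G}$, the recursion \eqref{recursion} reads $\va_n=\tfrac12\sum_{i=1}^{n-1}\P(\va_i\w\va_{n-i})$. Since $\hat{L}$ is a derivation of the wedge product and both factors are killed by the inductive hypothesis, we have $\hat{L}X=0$ for $X:=\va_i\w\va_{n-i}$. I would then push $\hat{L}$ through $\P=R\,\Delta\,\hat{\G}$, using the transferred Hodge--K\"ahler identities $[\hat{L},\Delta]=0$ and $[\hat{L},R]=-\i\Delta$, together with $\Delta^{2}=0$ and $[\hat{L},\hat{\G}]=0$ (the last holds because $L$ commutes with the Laplacian, hence with the Green operator, on $\A$):
\begin{equation*}
 \hat{L}\,\P X
 =[\hat{L},R]\,\Delta\hat{\G}X+R\,\hat{L}\,\Delta\hat{\G}X
 =-\i\Delta^{2}\hat{\G}X+R\,\Delta\,\hat{\G}\,\hat{L}X=0,
\end{equation*}
where the first term dies by $\Delta^{2}=0$ and the second by $\hat{L}X=0$ after commuting $\hat{L}$ past $\Delta$ and $\hat{\G}$. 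Summing over $i$ gives $\hat{L}\va_n=0$, completing the induction and hence the proof.

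I expect the only genuinely delicate point to be the base case: one must know that $\H^{2,1}$ consists of primitive ($S\!U(2)$-singlet) forms, which rests on $H^{1}(X)=0$ for a Calabi--Yau threefold and not on Ricci-flatness of the chosen K\"ahler metric. The inductive step is robust---in fact the precise constant in $[\hat{L},R]$ is irrelevant, since that contribution always carries a factor $\Delta^{2}$---so the real content is merely that $\P$ is built from $R$, $\Delta$ and $\hat{\G}$, each of which interacts with $\hat{L}$ only through operators that square to zero or commute with $\hat{L}$.
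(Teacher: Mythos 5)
Your proof is correct and follows essentially the same route as the paper: reduce to $\hat{L}\va=0$ via Lemma~\ref{Boneone}, then induct using the derivation property of $\hat{L}$ and the fact that $\hat{L}$ commutes with $\P$ (the paper states $[L,\G]=0$ and $[L,\b{\pa}^{\dag}\pa]=0$ outright, which is exactly your computation with the $[\hat{L},R]$ term dying against $\Delta^2=0$). The only cosmetic difference is the base case, where the paper invokes $\H^{3,2}=\{0\}$ directly rather than primitivity of harmonic $3$-forms via $H^1(X)=0$; these amount to the same Calabi--Yau Hodge-number vanishing.
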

\begin{proof}
 From the lemma \ref{Boneone},  it suffices to show that  $\hat{L}\va=0$.
 First note that $\hat{L}\va_1=0$ because $\hat{L}$ maps $\rho^{-1}(\H^{2,1})$ to $\rho^{-1}(\H^{3,2})=\{0\}$.
 Second, we observe that
 $\hat{L}$ commutes with $\P$ as $[L,\G]=0$ and $[L,\b{\pa}^{\dag}\pa]=0$,
 so that $\hat{L}\va_2=\frac12\P\hat{L}(\va_1\w\va_1)=\frac12\P(\hat{L}\va_1\w\va_1+\va_1\w\hat{L}\va_1)=0$,
 where we have used the fact that $\hat{L}$ acts as a derivation with respect to the wedge product.
 Finally,  from the recursion relation \eqref{recursion},  we see $\hat{L}\va_n=0$ for each $n$.  
\end{proof}

\subsection{Condensation of string and deformation of  supercharges}
Let $\varPhi[\va_1]=\sum_{n\geq 2}\va_n$, where $\va_1\in \rho^{-1}(\H^{2,1})$ and $\va_n$, $n\geq 2$,
 is the solution of  \eqref{pert} constructed above.
 If we let $\varPhi$ have vacuum expectation value $\varPhi[\va_1]$ and expand the KS action \eqref{ksactionagain}
 with respect to the fluctuation $\varPhi'$ around it, we get \cite{BCOV}
 \begin{align}
  S_X\left[\va_1|\varPhi[\va_1]+\varPhi'\right]
  & =S_X\left[\va_1|\varPhi[\va_1]\right]+\tilde{S}[\varPhi'],
  \label{vacuum}\\
  \tilde{S}[\varPhi']&=\Tr\left(\frac16\varPhi'\w\varPhi'\w\varPhi'+\frac12\Delta^{-1}(\b{\pa}\varPhi'+[\va\bullet \varPhi'])\w\varPhi'\right),
  \label{newvacuum}
 \end{align}
 where $\va=\va_1+\varPhi[\va_1]$ is the solution of the KS equation considered above.
 
 The first term of \eqref{vacuum} is known as the prepotential of topological string and has a following expansion \cite{LoSh}:
 \begin{equation}
  \begin{split}
    S_X\left[\va_1|\varPhi[\va_1]\right]
  &=\frac16\Tr(\va_1\w\va_1\w\va_1)+\frac18\Tr(\va_1\w\va_1\w\P(\va_1\w\va_1))\\
   &+\frac18\Tr(\va_1\w\va_1\w\P(\va_1\w \P(\va_1\w\va_1)))
   +\cdots.
\end{split}
\end{equation} 

We note that in the kinetic term of  $\tilde{S}[\varPhi']$ \eqref{newvacuum}  appearance of the new BRST operator 
$\b{\pa}+[\va\bullet\ ]$,
 which is nilpotent if $\va$ solves the KS equation,
 as is always the case for string field theories \cite{Sen1}.
The same deformation of the BRST operator $\b{\pa}$ has also been 
considered in the first quantization approach to topological B sigma model \cite{LaMa}.  

 In the reference \cite{MY},
 the authors have found that  in addition to the BRST operator $\b{\pa}$, 
 $S$ also deforms to $S+[\va\dia\ ]$ in our notation.
Here we remark  the identities $[\va\bullet \ ]=[\Delta,\va\w]$, $[\va\dia \ ]=[R,\va\w]$.
They have also given the equation for this  to be nilpotent:
 \begin{equation}
  \left(D^i\va^j_{\b{k}}+\va^i_{\b{l}}D^{\b{l}}\va^j_{\b{k}}\right)
   \bo{\theta}_i\bo{\theta}_j\bo{\eta}^{\b{k}}=0,
\label{mercus}
 \end{equation}
 where $D^i=\sum g^{i\b{j}}D_{\b{j}}$ is the covariant derivative of the Levi-Civita connection.
 However we can easily show  that the left hand side of \eqref{mercus} can be written as
 \begin{equation}
  %
-\left(  S\va+\frac12[\va\dia\va]\right)
  =-\i\hat{\ast}\left(\b{\pa}\va+\frac12[\va\bullet\va]\right),
 \end{equation}
 which  shows that \eqref{mercus} does not put any constraint on $\va$ other than the KS equation.
 The remaining operators of order two $\Delta$ and $R$  are left unchanged under the deformation \cite{MY}.
We note that from the relation between two odd  brackets \eqref{twobracketshodge},
it is easily seen that
\begin{equation}
S+[\va\dia \ ]=\hat{\ast}\circ(\b{\pa}+[\va\bullet \ ])\circ \hat{\ast}.
\end{equation}
The commutation relations between the Lefshetz operators and
 $[\va\bullet \ ]$, $[\va\dia \ ]$ are given by 
\begin{alignat}{2}
 [\hat{L},[\va\bullet \ ]]&=[\hat{L}\va\bullet\ ]=0, &\quad
 [\hat{L},[\va\dia \ ]]&=-\i[\va\bullet \ ],\\
 [\hat{\varLambda},[\va\dia \ ]]&=[\hat{\varLambda}\va\dia\ ]=0, &\quad
  [\hat{\varLambda},[\va\bullet \ ]]&=\i[\va\dia \ ],
\end{alignat}
which shows that the Hodge-K\"{a}hler identities  (\ref{hodgekahlerone}--\ref{hodgekahlerfour})
are preserved under the deformation
$\b{\pa}\to \b{\pa}+[\va\bullet \ ]$,
$S\to S+[\va\dia \ ]$.

 We want to  find a ``Calabi-Yau manifold'' $\tilde{X}$ which gives the identity
 $\tilde{S}[\varPhi']=S_{\tilde{X}}[0|\varPhi']$;
 it is clear that $\tilde{X}$ has a close relation to the deformation of the complex structure of $X$
 induced  by the solution of the KS equation $\va$.
Therefore we describe the deformation of complex structure below.

 \section{Deformation of Complex Structure}
\subsection{Classical deformation}
 Let $\va_1\in \rho^{-1}(\H^{2,1})$ be a massless mode of the KS action, so small in magnitude that 
 the solution of the KS equation constructed above $\va=\sum_{n\geq 1}\va_n$ converges,
 and $X_{\va_1}$ the Calabi-Yau manifold with the complex structure defined by $\va$,
 that is, a local function $f$ on $X_{\va_1}$ is holomorphic if
 $(\pa_{\b{i}}+\sum\va^j_{\b{i}}\pa_{j})f=0$ for all $\b{i}$.

 Local frames of $(1,0)$ and $(0,1)$ vector fields and 1-forms on $X_{\va_1}$ are given by
 \begin{alignat}{2}
  \bo{e}_{i}&=\frac{\pa}{\pa z^{i}}+\sum\b{\va}^{\b{j}}_i\frac{\pa}{\pa z^{\b{j}}},
  &\quad
    \bo{e}_{\b{j}}&=\frac{\pa}{\pa z^{\b{j}} }+\sum\va^i_{\b{j}}\frac{\pa}{\pa z^i},\\
  \bo{f}^i&=\d z^i-\sum\va^i_{\b{j}}\d z^{\b{j}},
&\quad
  \bo{f}^{\b{j}}&=\d z^{\b{j}}-\sum\b{\va}^{\b{j}}_i\d z^{i},
 \end{alignat}
 where $\b{\va}^{\b{j}}_i$ is the complex conjugate of  $\va^j_{\b{i}}$.
 Their pairing is
 $\langle\bo{f}^i,\bo{e}_j\rangle=\delta^i_j-\sum\va^i_{\b{k}}\b{\va}^{\b{k}}_j=(I-\va\b{\va})^i_j$.
 If we define the matrix $N=(I-\va\b{\va})^{-1}$, then
 we have the identities
 $\sum N^i_j\va^j_{\b{k}} =\sum \va^i_{\b{l}}\b{N}^{\b{l}}_{\b{k}}$,
 and $\sum g_{m\b{l}}N^m_k=\sum g_{k\b{j}}\b{N}^{\b{j}}_{\b{l}}$,
 where $\b{N}$ is the complex conjugate of $N$,
and 
$\pa/\pa z^i=\sum N^j_i(\bo{e}_j-\b{\va}^{\b{k}}_j\bo{e}_{\b{k}})$, \
$\d z^l=\sum N^l_i(\bo{f}^i+\va^i_{\b{k}}\bo{f}^{\b{k}})$.

We claim that a holomorphic three-form on $X_{\va_1}$ is given by
\begin{equation}
 \varOmega_{\va_1}=s\bo{f}^1 \bo{f}^2 \bo{f}^3
  =\rho\left(1-\va+\frac{1}{2!}\va\w\va-\frac{1}{3!}\va\w\va\w\va\right).
\end{equation}
As $\varOmega_{\va_1}$ is a $(3,0)$-form on $X_{\va_1}$,
we have only to show that $\d\varOmega_{\va_1}=0$,
which is equivalent to
\begin{equation}
 (-\b{\pa}+\Delta) \left(1-\va+\frac{1}{2!}\va\w\va-\frac{1}{3!}\va\w\va\w\va\right)=0.
  \label{hol}
\end{equation}
It is easy to check \eqref{hol} by the KS equation and the dGBV algebra relations.
We also note the volume form that it defines is
 $\i\varOmega_{\va_1}\w\overline{\varOmega}_{\va_1}=\det(I-\va\b{\va})\i \varOmega\w\overline{\varOmega}$.

 The K\"{a}hler form $\omega$ is still of the type $(1,1)$ under the new complex structure,
 and given by
 $\omega=\i\sum g_{k\b{j}}\b{N}^{\b{j}}_{\b{l}} \bo{f}^k \bo{f}^{\b{l}}$.

 The configuration space $\B_{\va_1}$ for the KS gravity defined on $X_{\va_1}$ can be written as
 \begin{equation}
  \B_{\va_1}=\bigoplus_{p,q} B^{p,q}_{\va_1}, \quad
  B^{p,q}_{\va_1}=\left\{\beta= \frac{1}{p!q!}\sum \beta^{i_1,\dots, i_p}_{\b{j}_1,\dots,\b{j}_q}(z,\b{z})\,
  \bo{e}_{i_1}\cdots\bo{e}_{i_p}
  \bo{f}^{\b{j}_1}\cdots \bo{f}^{\b{j}_q}\right\}.
  \label{deformedB}
 \end{equation}
 Note that $\{\B_{\va_1}\}_{\va_1\in \rho^{-1}(\H^{2,1})}$ defines a vector bundle, or a bundle
 of super-algebras on a neighborhood of $[X]$
 in the moduli space $\mathcal{M}$.
It would be nice to know interesting connections on this bundle \cite{RSZ}.

It is also convenient to make a choice of  local holomorphic coordinates $w^{\alpha}$ of $X_{\va_1}$
\cite{BGI,BaLa}.
If we put the relations between local frames  of $T^{\ast}X_{\va_1}$ and  $TX_{\va_1}$ by
\begin{align}
 \bo{f}^{i}&=\sum_{\alpha} A^i_{\alpha}\d w^{\alpha},\\
 \bo{e}_i&=\sum_{\alpha} B_i^{\alpha}\frac{\pa}{\pa w^{\alpha}},
\end{align}
then we have the matrix relation $AB=N^{-1}$ and
 \begin{align}
  \d w^{\alpha}&=(A^{-1})^{\alpha}_i\d z^i-(A^{-1}\va)^{\alpha}_{\b{j}}\d z^{\b{j}},
  \label{relationone}\\
  \d z^{i}&=(B^{-1})^i_{\alpha}\d w^{\alpha}+(\va \b{B}^{-1})^i_{\b{\beta}}\d w^{\b{\beta}}.
  \label{relationtwo}
  \end{align}
Among the consistency conditions derived from \eqref{relationone} are
  \begin{align}
   \pa_i(A^{-1})^{\alpha}_j - \pa_j(A^{-1})^{\alpha}_i&=0,\\
\sum_{\alpha} (A^{-1})^{\alpha}_le_{\b{j}}A^i_{\alpha}&=\pa_l\va^i_{\b{j}},
  \end{align}
  where $e_{\b{j}}=\pa_{\b{j}}+\sum \va^s_{\b{j}}\pa_s$ 
is the differential operator corresponding to the vector field $\bo{e}_{\b{j}}$.
Similarly, from \eqref{relationtwo}, we obtain the consistency condition
 \begin{equation}
  \sum B^{\alpha}_ie_{\b{j}}(B^{-1})^l_{\alpha}=\sum N^l_kM_{i\b{j}}^k,
   \quad M_{i\b{j}}^k:=e_i\va^k_{\b{j}}+\sum \va^k_{\b{m}}e_{\b{j}}\,\b{\va}^{\b{m}}_i.
 \end{equation}

The K\"{a}hler two-form $\omega$ is rewritten as
$\omega=\i \tilde{g}_{\alpha\b{\beta}}\d w^{\alpha}\w\d w^{\b{\beta}}$, where
\begin{equation}
 \tilde{g}_{\alpha\b{\beta}}=\sum_{j,k} A^k_{\alpha}(\b{B}^{-1})^{\b{j}}_{\b{\beta}}\, g_{k\b{j}},\quad
 \tilde{g}^{\alpha\b{\beta}}=\sum_{l,m} \b{B}^{\b{\beta}}_{\b{l}}(A^{-1})^{\alpha}_m \, g^{\b{l}m},
\end{equation}
from which we can compute the Christoffel symbols $\tilde{\Gamma}^{\alpha}_{\beta\gamma}$
for the K\"{a}hler metric $\tilde{g}_{\alpha\b{\beta}}$.
\subsection{Differential operators  in new complex structure}
In this subsection, we compute the four differential operators introduced  in \ref{fourdiffop}
for the KS gravity theory on the deformed Calabi-Yau manifold $X_{\va_1}$, which we denote by
$\b{\pa}_{\va_1}$, $\Delta_{\va_1}$, $S_{\va_1}$, and $R_{\va_1}$,
and show that these are just reduced  to
$\b{\pa}+[\va\bullet\ ]$, $\Delta$, $S+[\va\dia\ ]$, and $R$, respectively
in the holomorphic limit where
we make an analytic continuation of the moduli parameters so that
$\b{\va}_1$ is set to zero while $\va_1$ is kept fixed \cite{BCOV}.

Let us explain  the ``analytic continuation'' above more detail.
Let $\overline{\M}$ be the same manifold as $\M$ with the opposite complex structure.
Then we extend the moduli space to be $\M\times \overline{\M}$ \cite{BCOV,Wi3,Kon}
and the original moduli space is diagonally embedded  as $\M_{\text{diag}}$.
%
Let $(t^{a})$ be local holomorphic coordinates of $\mathcal{M}$,
and the point  $[X]\in \M$ correspond to  $t=t_0$.
By the classical deformation described in the previous subsection, 
$[X]\in \M\times \overline{\M}$, with coordinates $(t_0,\b{t}_0)$, 
 moves along the diagonal line to the point corresponding to $X_{\va_1}$;
 then  setting $\b{\va}_1=0$ we arrive at a point on the horizontal line $\b{t}=\b{t}_0$.
 We call the point $X_{\va_1}^{\text{hol}}$ and the deformation from $X$ to $X_{\va_1}^{\text{hol}}$
 the holomorphic deformation.


It is easy to give the differential operators on $X_{\va_1}$  
in terms of the local holomorphic coordinates $w^{\alpha}$
according to the formulas (\ref{fourone}--\ref{fourfour}).
To perform the holomorphic limit above, however, it is necessary to change
coordinates from $(w^{\alpha},w^{\b{\beta}})$ to $(z^i,z^{\b{j}})$.

The four differential operators acting on $\B_{\va_1}$ of the form \eqref{deformedB}  
are given by
\begin{align}
 \b{\pa}_{\va_1}&=\b{N}^{\b{k}}_{\b{r}}\bo{f}^{\b{r}}
 \left(\frac{\pa}{\pa z^{\b{k}}}+\va^{l}_{\b{k}}\frac{\pa}{\pa z^l}
 -N^p_sM^s_{i\b{k}}\bo{e}_p\frac{\pa}{\pa \bo{e}_i}\right),\\
 \Delta_{\va_1}&=(s\det N^{-1})^{-1}\circ\left(\frac{\pa}{\pa z^i}+\b{\va}^{\b{j}}_i\frac{\pa}{\pa z^{\b{j}}}\right)
 \frac{\pa}{\pa \bo{e}_i} \circ (s\det N^{-1})
 -\bo{f}^{\b{m}}\pa_{\b{m}}\b{\va}^{\b{j}}_i
 \frac{\pa}{\pa \bo{e}_i} \frac{\pa}{\pa \bo{f}^{\b{j}}},\\
 S_{\va_1}&=g^{\b{u}l} \b{N}^{\b{k}}_{\b{u}} \bo{e}_l
 \left(
 -\frac{\pa}{\pa z^{\b{k}}}-\va^{i}_{\b{k}}\frac{\pa}{\pa z^i}
 +N^n_rM_{i\b{k}}^r\bo{e}_n\frac{\pa}{\pa \bo{e}_i}
 +g_{\b{m}n}g^{\b{j}q}\b{N}^{\b{m}}_{\b{s}}M^n_{q\b{k}}\,\bo{f}^{\b{s}}\frac{\pa}{\pa \bo{f}^{\b{j}}}\right)\nonumber\\
 &+ g^{\b{u}l} \b{N}^{\b{k}}_{\b{u}} \bo{e}_l
 \left(\Gamma^{\b{j}}_{\b{k}\b{s}}+g^{m\b{j}}g_{\b{s}q}\Gamma^q_{mn}\va^n_{\b{k}}\right)\bo{f}^{\b{s}}\frac{\pa}{\pa \bo{f}^{\b{j}}}, \\
R_{\va_1}&=e^{-\tilde{\sigma}}\circ g^{k\b{j}} \left(\frac{\pa}{\pa z^k}+\b{\va}^{\b{l}}_k\frac{\pa}{\pa z^{\b{l}}}\right)
 \frac{\pa}{\pa \bo{f}^{\b{j}}}\circ  e^{\tilde{\sigma}}
 +g^{\b{t}k}\pa_{\b{t}}\b{\va}^{\b{j}}_k \frac{\pa}{\pa \bo{f}^{\b{j}}}
 -g^{\b{j}k}\pa_{\b{n}}\b{\va}^{\b{l}}_k\,\bo{f}^{\b{n}}\frac{\pa}{\pa \bo{f}^{\b{j}}}\frac{\pa}{\pa \bo{f}^{\b{l}}} \nonumber \\
 &+N_p^{m}\left(
 g^{\b{j}k}  e_{i}(N^{-1})^p_k
 +g^{\b{l}p}  \b{M}^{\b{j}}_{\b{l}i}
 +g^{\b{j}s}(N^{-1})^k_s   (\Gamma^p_{ik}+g^{\b{l}p}g_{k\b{q}}\Gamma^{\b{q}}_{\b{n}\b{l}}\b{\va}^{\b{n}}_{i})
 \right)
 \bo{e}_{m}\frac{\pa}{\pa \bo{e}_{i}}\frac{\pa}{\pa \bo{f}^{\b{j}}},
\end{align}
where $e^{\tilde{\sigma}}=\det N e^{\sigma}$ and use of the Einstein summation convention is unavoidable.

In  the holomorphic limit $\b{\va}\to 0$,
$\bo{e}_i\to \bo{\theta}_i$, \
$\bo{f}^{\b{j}}\to \bo{\eta}^{\b{j}}$, \
$N\to I$, \
$M_{i\b{j}}^k\to \pa_i\va^k_{\b{j}}$.\\
Let us see the limit of each operators.
The cases of $\b{\pa}$, $\Delta$ and $R$ are easy:
\begin{equation}
 \b{\pa}_{\va_1}\to \bo{\eta}^{\b{k}}\left(\frac{\pa}{\pa z^{\b{k}}}+\va^l_{\b{k}}\frac{\pa}{\pa z^l}
				      -\pa_i\va^p_{\b{k}}\bo{\theta}_p\frac{\pa}{\pa \bo{\theta}_i}\right)
 =\b{\pa}+[\va\bullet \ ],
\end{equation}
where we have used the formula \eqref{oddpoisson} and 
\begin{equation}
 \Delta_{\va_1}\to
  s^{-1}\circ \frac{\pa}{\pa z^i}\frac{\pa}{\pa \bo{\theta}_i} \circ s=\Delta,
\end{equation}
\begin{equation}
 R_{\va_1}\to
  e^{-\sigma}\circ g^{k\b{j}}\frac{\pa}{\pa z^k}\frac{\pa}{\pa \bo{\eta}^{\b{j}}} \circ e^{\sigma}
 +g^{k\b{j}}\Gamma^m_{ik}\,\bo{\theta}_m\frac{\pa}{\pa \bo{\theta}_i}\frac{\pa}{\pa \bo{\eta}^{\b{j}}}=R.
\end{equation}

For the case of $S$, we have
\begin{equation}
\begin{split}
 S_{\va_1} \to
 - &g^{\b{k}l}\bo{\theta}_l
 \left( 
 \frac{\pa}{\pa z^{\b{k}}}
 -\Gamma^{\b{j}}_{\b{s}\b{k}}\bo{\eta}^{\b{s}}\frac{\pa}{\pa \bo{\eta}^{\b{j}}}\right) \\
&\qquad\qquad
 -g^{\b{k}l}\bo{\theta}_l
 \left(-\va^t_{\b{k}}  \frac{\pa}{\pa z^t}
 +\pa_i\va^n_{\b{k}}\bo{\theta}_n\frac{\pa}{\pa \bo{\theta}_i}
 +g_{\b{k}n}g^{\b{j}q}D_q\va^n_{\b{s}}\bo{\eta}^{\b{s}}\frac{\pa}{\pa \bo{\eta}^{\b{j}}}\right).
\end{split}
\label{Sdef}
\end{equation}
On the other hand we know from the formula \eqref{oddpoissonR}
\begin{equation}
 [\va\dia\ ]=-g^{\b{a}b}\va^i_{\b{a}}\bo{\theta}_i
  \left(\frac{\pa}{\pa z^b}+\Gamma^c_{bd}\bo{\theta}_c\frac{\pa}{\pa \bo{\theta}_d}\right)
  +g^{\b{a}b}D_b\va^c_{\b{j}}\bo{\theta}_c\bo{\eta}^{\b{j}}\frac{\pa}{\pa \bo{\eta}^{\b{a}}},
  \label{oddpoissonRder}
\end{equation}
and the difference between the second term of the right hand side of \eqref{Sdef} and \eqref{oddpoissonRder} 
$D_i(g^{\b{k}l}\va^n_{\b{k}})\bo{\theta}_l\bo{\theta}_n\pa/\pa \bo{\theta}_i$ vanishes owing  to   Prop.\ref{propo}.
This shows  that the holomorphic limit of $S_{\va_1}$ is $S+[\va\dia \ ]$.

In conclusion, we have identified the the deformation of the differential operators
observed in \eqref{newvacuum}   with the one
caused by the holomorphic limit of the deformation of the complex structure,
so that we can write $\tilde{S}[\varPhi']=S_{X_{\va_1}^{\text{hol}}}[0|\varPhi']$.

At first sight, it may seem  that for the KS action  on $X_{\va_1}^{\mathrm{hol}}$
we must use the deformed trace map 
\begin{equation}
 \Tr_{\va_1}^{\mathrm{hol}}(\alpha)=\int_X\rho_{\va_1}^{\mathrm{hol}}(\alpha)\w\varOmega_{\va_1},
\end{equation}
where $\rho_{\va_1}^{\mathrm{hol}}(\bo{\theta}_I)=(-1)^{|I|(|I|-1)/2}\e_{I,I^{\ast}}s\bo{f}^{I^{\ast}}$.
However 
 $\Tr_{\va_1}^{\mathrm{hol}}(\alpha)=\Tr(\alpha)$ for any $\alpha\in \B$.


\subsection{Deformation of states}
Let us denote the deformed operators by
$(\b{\pa})^{\text{hol}}_{\va_1}=\b{\pa}+[\va\bullet \ ]$,
$S_{\va_1}^{\text{hol}}=S+[\va\dia \ ]$.
Then it is easy to see that
\begin{equation}
(\b{\pa})^{\text{hol}}_{\va_1}\circ \Delta+\Delta\circ (\b{\pa})^{\text{hol}}_{\va_1}=0,\quad 
S_{\va_1}^{\text{hol}}\circ R+R\circ S_{\va_1}^{\text{hol}}=0.
\end{equation}
We can also show
\begin{equation}
 (\b{\pa})^{\text{hol}}_{\va_1}\circ S_{\va_1}^{\text{hol}} +S_{\va_1}^{\text{hol}}\circ (\b{\pa})^{\text{hol}}_{\va_1}=0.
  \label{commute}
\end{equation}
To see this,
we calculate the action of the left hand side of \eqref{commute} on $\alpha\in \B$;
\begin{align}
 (\b{\pa})^{\text{hol}}_{\va_1} S_{\va_1}^{\text{hol}}(\alpha)
 &=\b{\pa}(S\alpha+[\va\dia \alpha])+[\va\bullet(S\alpha+[\va\dia \alpha])], \nonumber\\
 S_{\va_1}^{\text{hol}}(\b{\pa})^{\text{hol}}_{\va_1}(\alpha)
 &=S(\b{\pa}\alpha+[\va\bullet\alpha])+[\va\dia(\b{\pa}\alpha+[\va\bullet\alpha])]. \nonumber
\end{align}
The sum of the two above becomes
 \begin{equation}
\left\{ (\b{\pa})^{\text{hol}}_{\va_1}, S_{\va_1}^{\text{hol}} \right\}(\alpha)
 =[S\va\bullet\alpha]+[\b{\pa}\va\dia \alpha]
 +[\va\dia[\va\bullet\alpha]]+[\va\bullet[\va\dia\alpha]],
 \label{middle}
\end{equation}
where we have used the formula \eqref{laplacian}.

Using the KS equation of motion 
$\b{\pa}\va=-(1/2)\Delta(\va\w\va)$
and its Hodge dual form
$S\va=-(1/2)R(\va\w\va)$,
we can rewrite the first two terms of the right hand side of \eqref{middle} as
\begin{equation*}
 \begin{aligned}
  & -\frac12[R(\va\w\va)\bullet\alpha]-\frac12[\Delta(\va\w\va)\dia\alpha]\\
&  =\frac12\Delta(R(\va\w\va)\w\alpha)+\frac12R(\Delta(\va\w\va)\w\alpha)
  +\frac12R(\va\w\va)\w\Delta\alpha
  +\frac12\Delta(\va\w\va)\w R\alpha,
\end{aligned}
\end{equation*}
while the last two terms of the right hand side of \eqref{middle} are 
\begin{align*}
 [\va\dia[\va\bullet\alpha]]&=R(\va\w\Delta(\va\w\alpha)-\va\w\va\w\Delta\alpha)
 -\va\w R\Delta(\va\w\alpha)+\va\w R(\va\w\Delta\alpha),\\
 [\va\bullet[\va\dia\alpha]]&=\Delta(\va\w R(\va\w\alpha)-\va\w\va\w R\alpha)
 -\va\w \Delta R(\va\w\alpha)+\va\w \Delta(\va\w R\alpha).
\end{align*}
At this point, we have
\begin{align}
& \left\{(\b{\pa})^{\text{hol}}_{\va_1}, S_{\va_1}^{\text{hol}} \right\}(\alpha)\nonumber\\
 =&\Delta\left(\frac12R(\va\w\va)\w\alpha+\va\w R(\va\w \alpha)-\va\w\va\w R\alpha\right)\nonumber\\
 +&R\left(\frac12\Delta(\va\w\va)\w\alpha+\va\w \Delta(\va\w \alpha)-\va\w\va\w \Delta\alpha\right)\nonumber\\
 +&\frac12R(\va\w\va)\w\Delta\alpha+\va\w R(\va\w\Delta\alpha)\nonumber\\
 +&\frac12\Delta(\va\w\va)\w R\alpha+\va\w \Delta(\va\w R\alpha).
 \label{final}
 \end{align}

 By the seven-term relation \cite{Ro} which  holds for  operators of order two  such as $\Delta$ or $R$,
and is merely the expansion of the dGBV relation \eqref{seventermone} or \eqref{seventermtwo}:
\begin{equation}
  \begin{aligned}
   \Delta(\alpha\w\beta\w\gamma)&=\Delta(\alpha\w\beta)\w\gamma+(-1)^{|\alpha|}\alpha\w\Delta(\beta\w\gamma)
   +(-1)^{|\beta|\cdot(|\alpha|+1)}\beta\w\Delta(\alpha\w\gamma)\\
 &  -\Delta(\alpha)\w\beta\w\gamma-(-1)^{|\alpha|}\alpha\w\Delta(\beta)\w\gamma
   -(-1)^{|\alpha|+|\beta|}\alpha\w\beta\w\Delta(\gamma),
   \label{seventerm}
   \end{aligned}
\end{equation}
we have $R(\va\w\va\w\alpha)=R(\va\w\va)\w\alpha+2\va\w R(\va\w\alpha)-\va\w\va\w R(\alpha)$;
thus the first term of the right hand side of \eqref{final} becomes
\begin{equation*}
 \frac12\Delta R(\va\w\va\w\alpha)-\frac12 \Delta(\va\w\va\w R\alpha).
\end{equation*}
Similarly, the second term of the right hand side of \eqref{final} becomes
\begin{equation*}
 \frac12 R\Delta (\va\w\va\w\alpha)-\frac12 R(\va\w\va\w \Delta\alpha),
\end{equation*}
so that their sum is
\begin{equation}
 -\frac12 \Delta(\va\w\va\w R\alpha)-\frac12 R(\va\w\va\w \Delta\alpha).
  \label{firstsecond}
\end{equation}
Then another use of the seven-term relation \eqref{seventerm} shows that  
\eqref{firstsecond} cancels out the third and the fourth terms of \eqref{final}.

We can define the deformed Lapacian by
\begin{equation}
 \hat{\lap}^{\text{hol}}_{\va_1}=-(\b{\pa})^{\text{hol}}_{\va_1}\circ R -R \circ (\b{\pa})^{\text{hol}}_{\va_1}
  =S^{\text{hol}}_{\va_1}\circ \Delta+\Delta\circ S^{\text{hol}}_{\va_1}.
\end{equation}
To see that the two operators above give the same $\hat{\lap}^{\text{hol}}_{\va_1}$,
   we have only to show
\begin{equation}
 -[\va\bullet R\alpha]-R[\va\bullet\alpha]=[\va\dia \Delta \alpha]+\Delta[\va\dia \alpha]
  \label{laplaceequality}
\end{equation}
for each $\alpha\in \B$.
However this  is the special case of the formula \eqref{formulalaplacian}.

Let us define a linear map $f_{\va_1}:\B\to\B$ labeled by an element
 $\va_1\in \rho^{-1}(\H^{2,1})$ as follows.
First for $\alpha\in \B$, set $\alpha_0=\alpha$,
 and define $\alpha_n$,  $n\geq 1$,  recursively by
 \begin{equation}
  \alpha_n=\sum_{k=1}^{n}\P(\va_k\w\alpha_{n-k}),
   \label{alpharecursion}
 \end{equation}
 where we recall that
 $\P$ is the massive propagator defined in \eqref{massivepropagator},
 and $\va_n$ is the solution of
 \eqref{pert} and given by the recursion relation \eqref{recursion}. 
 Then the map $f_{\va_1}$ is simply given by
 \begin{equation}
  f_{\va_1}(\alpha)=\sum_{n=0}^{\infty}\alpha_n.
   \label{infinitesum}
 \end{equation}
Presumably, the infinite sum in \eqref{infinitesum}
will converge for $\va_1$ so small enough that
the infinite sum $\va=\sum_{n\geq 1}\va_n$ converges.

 In fact, $f_{\va_1}$ has  nice properties as a map on $\Ker\Delta$:
 \begin{proposition}
  \label{closed}
 $f_{\va_1}$ maps  \ $\Ker\Delta\cap\Ker \b{\pa}$ \
 to \ $\Ker\Delta\cap\Ker (\b{\pa})^{\mathrm{hol}}_{\va_1}$.
 \end{proposition}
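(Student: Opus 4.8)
The plan is to trade the defining series \eqref{infinitesum} for the fixed-point equation it obeys, and to reduce the whole statement to three elementary identities for the massive propagator $\P$. Summing the recursion \eqref{alpharecursion} over $n\ge 1$ gives, for $F:=f_{\va_1}(\alpha)$,
\begin{equation*}
 F=\alpha+\P(\va\w F).
\end{equation*}
From the definition \eqref{massivepropagator}, $\P=\rho^{-1}\circ\b{\pa}^{\dag}\pa\G\circ\rho$, together with the conjugation rules $\rho^{-1}\circ\pa\circ\rho=\Delta$ and $\rho^{-1}\circ\b{\pa}_{\A}\circ\rho=-\b{\pa}$, the nilpotency $\pa^{2}=0$, the vanishing anticommutator $\pa\b{\pa}^{\dag}+\b{\pa}^{\dag}\pa=0$, and the standard K\"ahler facts $[\pa,\G]=0$, $\pa\Pi=0$, I will read off
\begin{equation*}
 \Delta\P=0,\qquad \P\Delta=0,\qquad \b{\pa}\P-\P\b{\pa}=-\Delta.
\end{equation*}
The first two are immediate from $\pa^{2}=0$; the third follows because $\b{\pa}^{\dag}\pa\G$ has commutator $\lap\,\pa\G=\pa$ with $\b{\pa}_{\A}$, the sign flip of $\b{\pa}$ under $\rho$ supplying the minus sign.

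Membership in $\Ker\Delta$ is then immediate: applying $\Delta$ to the fixed-point equation and using $\Delta\P=0$ gives $\Delta F=\Delta\alpha=0$, which needs only $\alpha\in\Ker\Delta$. For the harder half I must show $\big(\b{\pa}+[\va\bullet\ ]\big)F=0$. Set $W:=\b{\pa}F+[\va\bullet F]$. Since $\Delta\va=\Delta F=0$, the bracket \eqref{oddbracket} collapses to $[\va\bullet F]=\Delta(\va\w F)$; meanwhile $\b{\pa}\alpha=0$ and $\b{\pa}\P=\P\b{\pa}-\Delta$ give $\b{\pa}F=\P\b{\pa}(\va\w F)-\Delta(\va\w F)$. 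The two $\Delta(\va\w F)$ terms cancel, leaving the clean expression
\begin{equation*}
 W=\P\,\b{\pa}(\va\w F).
\end{equation*}

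The last step feeds the Kodaira--Spencer equation back in. Expanding $\b{\pa}(\va\w F)=\b{\pa}\va\w F+\va\w\b{\pa}F$, substituting $\b{\pa}\va=-\tfrac12\Delta(\va\w\va)$ from \eqref{kseq} (using $[\va\bullet\va]=\Delta(\va\w\va)$) and $\b{\pa}F=W-\Delta(\va\w F)$, separates a $W$-linear term $\P(\va\w W)$ from a remainder $-\tfrac12\P\big(\Delta(\va\w\va)\w F\big)-\P\big(\va\w\Delta(\va\w F)\big)$. The seven-term relation \eqref{seventerm} (again with $\Delta\va=\Delta F=0$) identifies the combination $\tfrac12\Delta(\va\w\va)\w F+\va\w\Delta(\va\w F)$ as $\tfrac12\Delta(\va\w\va\w F)$, so the remainder equals $-\tfrac12\,\P\Delta(\va\w\va\w F)=0$ by $\P\Delta=0$. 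Hence $W=\P(\va\w W)$. Because $\va=\sum_{k\ge1}\va_k$ raises the $\va_1$-grading, the map $X\mapsto\P(\va\w X)$ strictly increases degree, so expanding $W=\sum_{N}W_{N}$ and inducting on $N$ forces $W=0$. I expect the only real work to be bookkeeping the signs in the three propagator identities, and verifying that the remainder is genuinely an image of $\Delta$ so that $\P\Delta=0$ can annihilate it; once these are secured the argument closes itself through the nilpotency of $\P(\va\w\ )$.
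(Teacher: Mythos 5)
Your proof is correct, but it is organized quite differently from the paper's. The paper argues order by order in the $\va_1$-grading: it proves $\b{\pa}\alpha_n+\sum_k[\va_k\bullet\alpha_{n-k}]=0$ by induction, using the Jacobi identity \eqref{jacobi} to show the source term is $\b{\pa}$-closed, the fact that it lies in $\Im\Delta$, and then the $\Delta\b{\pa}$-Lemma to conclude it lies in $\Im(\b{\pa}\Delta)$ so that $\b{\pa}\b{\pa}^{\dag}\G$ acts on it as the identity. You instead resum the recursion into the fixed-point equation $F=\alpha+\P(\va\w F)$ and work with three operator identities for the propagator, of which the essential one is the homotopy relation $\b{\pa}\P-\P\b{\pa}=-\Delta$; this follows from $\lap\pa\G=\pa(\id-\Pi)=\pa$ and the sign flip $\b{\pa}_{\B}=-\widehat{\b{\pa}_{\A}}$, and all three check out. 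Where the paper invokes the Jacobi identity plus the $\pa\b{\pa}$-Lemma, you invoke the KS equation \eqref{kseq} plus the seven-term relation \eqref{seventerm} to show the obstruction $W=\b{\pa}F+[\va\bullet F]$ satisfies $W=\P(\va\w W)$, and then kill $W$ by the strict grading increase of $X\mapsto\P(\va\w X)$ --- which is the same formal-power-series induction as the paper's, just applied to $W$ rather than to $\alpha_n$. What your route buys is that the $\pa\b{\pa}$-Lemma is replaced by the weaker and more mechanical homotopy identity for $\P$, and the whole verification becomes a single algebraic computation in the dGBV algebra; what the paper's route buys is the explicit order-by-order identity \eqref{eqdeformation}, which is reused in the subsequent proposition on $\Ker\Delta\cap\Im\b{\pa}$, whereas your summed statement would have to be unpacked again there.
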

\begin{proof}
 We will show by  mathematical induction  
  \begin{equation}
   \b{\pa}\alpha_n+\sum_{k=1}^{n}[\va_k\bullet \alpha_{n-k}]=0.
\label{eqdeformation}
  \end{equation}
 %
 By the induction hypothesis and the Jacobi identity \eqref{jacobi},
 we have
 \begin{align}
  \b{\pa}\sum_{k=1}^{n}[\va_k\bullet \alpha_{n-k}]
  &=\sum_{k=1}^{n}\left([\b{\pa}\va_k\bullet \alpha_{n-k}]-[\va_k\bullet \b{\pa}\alpha_{n-k}]\right)
  \nonumber\\
  &=-\frac12\sum_{a+b+c=n}[[\va_a\bullet\va_b]\bullet\alpha_{c}]
  +\sum_{a+b+c=n}[\va_a\bullet[\va_b\bullet\alpha_c]]
  \nonumber\\
  &=0. \nonumber
 \end{align}
 It is also easy to see that $[\va_k\bullet\alpha_{n-k}]\in \Im\Delta$ for each $k$ in \eqref{eqdeformation}.
 Then the $\Delta\b{\pa}$-Lemma implies that
\begin{equation}
 \sum_{k=1}^{n}[\va_k\bullet\alpha_{n-k}]\in \Im(\b{\pa}\Delta)\subset \Im(\b{\pa}).
\end{equation} 
 Therefore
 \begin{equation}
  \b{\pa}\b{\pa}^{\dag}\G\rho \sum_{k=1}^{n}[\va_k\bullet\alpha_{n-k}]
   =\rho \sum_{k=1}^{n}[\va_k\bullet\alpha_{n-k}],
 \end{equation}
which is equivalent to \eqref{eqdeformation}.
\end{proof}
\begin{proposition}
$f_{\va_1}$ maps  \ $\Ker\Delta\cap\Im \b{\pa}$ \
 to \ $\Ker\Delta\cap\Im (\b{\pa})^{\mathrm{hol}}_{\va_1}$.
\end{proposition}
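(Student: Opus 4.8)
The plan is to reduce the whole statement to the single intertwining relation $f_{\va_1}\circ\b{\pa}=(\b{\pa})^{\mathrm{hol}}_{\va_1}\circ f_{\va_1}$ on $\Ker\Delta$, and then to combine it with Proposition \ref{closed} and the $\Delta\b{\pa}$-Lemma. Since $\b{\pa}^2=0$ we have $\Im\b{\pa}\subset\Ker\b{\pa}$, so any $\alpha\in\Ker\Delta\cap\Im\b{\pa}$ already lies in $\Ker\Delta\cap\Ker\b{\pa}$; Proposition \ref{closed} then gives $f_{\va_1}(\alpha)\in\Ker\Delta\cap\Ker(\b{\pa})^{\mathrm{hol}}_{\va_1}$, which secures membership in $\Ker\Delta$. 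The only thing left is to upgrade $\Ker(\b{\pa})^{\mathrm{hol}}_{\va_1}$ to $\Im(\b{\pa})^{\mathrm{hol}}_{\va_1}$. For this I would use the $\Delta\b{\pa}$-Lemma in the form $\Ker\Delta\cap\Im\b{\pa}=\Im(\b{\pa}\Delta)$, so that $\alpha=\b{\pa}\Delta\mu$ for some $\mu$; putting $\gamma=\Delta\mu$ gives $\gamma\in\Im\Delta\subset\Ker\Delta$ with $\b{\pa}\gamma=\alpha$. Granting the intertwining, $f_{\va_1}(\alpha)=f_{\va_1}(\b{\pa}\gamma)=(\b{\pa})^{\mathrm{hol}}_{\va_1}f_{\va_1}(\gamma)\in\Im(\b{\pa})^{\mathrm{hol}}_{\va_1}$, which finishes the proof.

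The substance is therefore the intertwining on $\Ker\Delta$, which I would prove by induction on the ghost grading of $f_{\va_1}$, exactly as in Proposition \ref{closed}. The two operator identities I would set up first are $\Delta\P=\P\Delta=0$ and $\b{\pa}\P-\P\b{\pa}=-\Delta$. Transferring $\P$ back to $\A$ through \eqref{massivepropagator}, all three reduce to elementary manipulations with $\pa,\b{\pa},\b{\pa}^{\dag},\G$: $\Delta\P$ becomes $\pa\b{\pa}^{\dag}\pa\G$, which vanishes by the Hodge--K\"ahler identity $\{\pa,\b{\pa}^{\dag}\}=0$ and $\pa^2=0$; $\P\Delta$ becomes $\b{\pa}^{\dag}\pa\G\pa=\b{\pa}^{\dag}\pa^2\G=0$ using $[\G,\pa]=0$; and the commutator reduces to $\b{\pa}\b{\pa}^{\dag}\pa\G-\b{\pa}^{\dag}\pa\G\b{\pa}=\pa$ after invoking $\G\b{\pa}=\b{\pa}\G$, $\pa\b{\pa}=-\b{\pa}\pa$, $\lap\G=\id-\Pi$ and $\pa\Pi=0$. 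A consequence of $\Delta\P=0$ is that every $\gamma_n$ produced by the recursion \eqref{alpharecursion} lies in $\Ker\Delta$ once $\gamma_0=\gamma$ does, so that $[\va_k\bullet\,\cdot\,]$ acts on each $\gamma_m$ simply as $\Delta(\va_k\w\,\cdot\,)$.

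For the induction, let $\gamma_n'$ denote the degree-$n$ component $\b{\pa}\gamma_n+\sum_{k=1}^n[\va_k\bullet\gamma_{n-k}]$ of $(\b{\pa})^{\mathrm{hol}}_{\va_1}f_{\va_1}(\gamma)$, and let $\delta_n$ be the degree-$n$ component of $f_{\va_1}(\b{\pa}\gamma)$; the goal is $\gamma_n'=\delta_n$, the case $n=0$ being immediate. For the step I would expand $\b{\pa}\gamma_n=\sum_j\b{\pa}\P(\va_j\w\gamma_{n-j})$, push $\b{\pa}$ through $\P$ via $\b{\pa}\P=\P\b{\pa}-\Delta$, and use $\Delta(\va_j\w\gamma_{n-j})=[\va_j\bullet\gamma_{n-j}]$ to cancel the bare bracket terms, leaving $\gamma_n'=\sum_j\P(\b{\pa}\va_j\w\gamma_{n-j})+\sum_j\P(\va_j\w\b{\pa}\gamma_{n-j})$. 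Comparing with $\delta_n=\sum_k\P(\va_k\w\delta_{n-k})$ and substituting the inductive identity $\delta_{n-k}=\gamma_{n-k}'$ reduces the claim to the algebraic identity $-\tfrac12\,\mathcal{S}=\mathcal{R}$, where $\mathcal{S}=\sum_{a+b+c=n}\P([\va_a\bullet\va_b]\w\gamma_c)$ is supplied by the KS equation $\b{\pa}\va_j=-\tfrac12\sum_{a+b=j}[\va_a\bullet\va_b]$ and $\mathcal{R}=\sum_{k+l+m=n}\P(\va_k\w[\va_l\bullet\gamma_m])$. This I would close using the Leibniz rule \eqref{seventermone} together with $\P\Delta=0$: rewriting $\va_k\w[\va_l\bullet\gamma_m]$ by \eqref{seventermone} and then applying $[\va_l\bullet(\va_k\w\gamma_m)]=\Delta(\va_l\w\va_k\w\gamma_m)-\va_l\w[\va_k\bullet\gamma_m]$ (valid on $\Ker\Delta$) kills the $\Delta$-term under $\P$, yielding the self-referential relation $\mathcal{R}=-\mathcal{R}-\mathcal{S}$, hence $2\mathcal{R}=-\mathcal{S}$, which is exactly the needed factor $\tfrac12$.

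The main obstacle is this last bookkeeping step: keeping the graded signs, the multi-index ranges, and the symmetry $[\va_a\bullet\va_b]=[\va_b\bullet\va_a]$ (valid since each $|\va_j|$ is even) consistent, so that the Leibniz rearrangement and the annihilation $\P\Delta=0$ conspire to give $2\mathcal{R}=-\mathcal{S}$ rather than some other multiple. Everything else---Proposition \ref{closed}, the $\Delta\b{\pa}$-Lemma, and the three operator identities---is either already established or a short computation on the $\A$ side.
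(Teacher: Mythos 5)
Your proposal is correct and follows essentially the same route as the paper: both write $\alpha=\b{\pa}\beta_0$ with $\beta_0\in\Im\Delta$ and exhibit $f_{\va_1}(\beta_0)$, built from the same $\P$-recursion, as an explicit $(\b{\pa})^{\mathrm{hol}}_{\va_1}$-preimage of $f_{\va_1}(\alpha)$ --- i.e.\ the intertwining relation order by order. The only difference is that the paper dismisses the order-$n$ verification as ``easily verified,'' while you carry it out via $[\b{\pa},\P]=-\Delta$, $\P\Delta=\Delta\P=0$, and the Jacobi/Leibniz bookkeeping, all of which checks out.
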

 \begin{proof}
  In this case, 
  $\alpha_0\in \Ker\Delta\cap\Im \b{\pa}=\Im(\b{\pa}\Delta)$.
  Thus we can set $\alpha_0=\b{\pa}\beta_0$, $\beta_0\in \Im(\Delta)$.

  We should  find a series $\beta_1,\beta_2,\beta_3,\dots$, which satisfies the equation
  \begin{equation}
   f_{\va_1}(\alpha)=\sum_{n=0}^{\infty}\alpha_n
    =\b{\pa}\sum_{n=0}^{\infty}\beta_n+\sum_{m=1}^{\infty}\sum_{n=0}^{\infty}[\va_m\bullet \beta_n]=0,
    \nonumber
  \end{equation}
  the $n$th order terms of which are
 \begin{equation}
  \alpha_n=\b{\pa}\beta_n+\sum_{k=1}^n[\va_k\bullet\beta_{n-k}].
   \label{exactsol}
 \end{equation}
However, it is easily verified  that $\beta_n$s defined  by the recursion relation
  \begin{equation}
   \beta_n=\P\sum_{k=1}^{n}(\va_k\w\beta_{n-k})
  \end{equation}
solve the equation \eqref{exactsol}.
 \end{proof}
To sum up, the diagram below is commutative:
\begin{equation}
\begin{CD}
\Ker\Delta@> f_{\va_1} >> \Ker\Delta\\
@V\b{\pa} VV           @VV(\b{\pa})^{\mathrm{hol}}_{\va_1} V\\
\Ker \Delta@>f_{\va_1}>> \Ker\Delta.
\end{CD}
\end{equation}
We can also show that $f_{\va_1}\circ \hat{\ast}=\hat{\ast}\circ f_{\va_1}$ from
$\hat{\ast}\circ \P=-\P\circ\hat{\ast}$, and
that
$f_{\va_1}$ maps  \ $\rho^{-1}(\H)=\Ker\hat{\lap}$ \
  to \ $\Ker\Delta\cap \Ker(\b{\pa})^{\mathrm{hol}}_{\va_1}\cap \Ker R\cap\Ker S^{\mathrm{hol}}_{\va_1}$.
 %
%
%

\section{Outlook}
In this paper 
we have analyzed
the equation of motion and deformations of the classical KS gravity theory
using the algebraic structure
of the configuration space $\B$.
Quantization of the KS gravity can be performed by the Batalin-Vilkovisky formalism \cite{Bo,CBTh,Ha,Zw,BCOV,BeSa},
where we relax the condition on the ghost number of the field;
in the action \eqref{ksaction}, the field $\va$ has a contribution from each sector $B^{p,q}\cap \Ker\Delta$,
except for $p\ne 3$, where $\Delta^{-1}$ cannot be  defined \cite{BCOV}.
Then the massless part of the field $\va$ necessarily induces a extended deformation of the Calabi-Yau manifold
\cite{Wi1,Ran,BaKo}.
It should be clear that the algebraic tools developed in this paper
are useful  in the quantization problem of the KS gravity.

It would also  be  interesting to study
the open-closed topological B-model \cite{Hof}
 from the point of view of the second quantization \cite{Wi2,Zw2}.
\vs{0.5cm}

 \noindent
 \textbf{\large Acknowledgements}\\
The author would like to thank  S. Mizoguchi for valuable discussions.



\end{document}